\newtheorem{theorem}{Theorem}[section]
\newtheorem{corollary}{Corollary}[theorem]
\newtheorem{lemma}[theorem]{Lemma}
\newtheorem{problem}{Problem}
\begin{document}
    
    \newcommand{\university}    {Chennai Mathematical Institute}
\newcommand{\dates}   {June 15, 2020}

\newcommand{\thesistype}    {M.Sc. Thesis}
\newcommand{\thesistitle}   {LTL with Local and Remote Data Constraints}

\newcommand{\supervisorlast}    {Praveen}                             
\newcommand{\supervisorfirst}   {M.}
\newcommand{\authorlast}    {Bhaskar}                              
\newcommand{\authorfirst}   {Ashwin}
\newcommand{\authornamefl}  {\authorfirst \space \authorlast} 
\newcommand{\supervisornamefl}  {Supervisor: \supervisorfirst \space \supervisorlast}
\newcommand{\authornamelf}  {\authorlast \space \authorfirst}

\newcommand{\dottedline}    {............................}

    \pagenumbering{gobble}

    \begin{titlepage}
    \begin{center}
       
        \vspace{15cm}
        \huge
        \textbf{\thesistitle}

        \vspace{2cm}
        \LARGE
        \textbf{\authornamefl}
        
        \vspace{0.05cm}
        \LARGE
        \text{\thesistype}
        
        \vspace{6cm}
        \LARGE
        \textbf{\supervisornamefl}

        \vfill
        \Large

        \vspace{0.5cm}
        \LARGE
        \text{\university}
        
        \vspace{0.5cm}
        \LARGE
        \text{\dates}
    \end{center}
\end{titlepage}

    \chapter*{Abstract} 
\begin{flushleft}
We consider an extension of linear-time temporal logic (LTL) with both local and remote data constraints interpreted over a concrete domain. This extension is a natural extension of constraint LTL and the Temporal Logic of Repeating Values, which have been studied before. We shall use previous results to prove that the satisfiability problem for this logic is decidable. Further, we shall see that trying to extend this logic by making it more expressive can lead to undecidability. 
\end{flushleft}
 
    \chapter*{Acknowledgements}

\begin{flushleft}
I would like to thank my supervisor, Prof. M. Praveen, for initiating me to the field of Logic, for helping me choose my thesis subject and for guiding me at each step, giving me the right inputs whenever needed. 
\end{flushleft}

\begin{flushleft}
I also wish to thank Prof. Srivathsan for motivating me to pursue the field of Logic and Automata Theory, and for helping me nurture my presentation and report writing skills.
\end{flushleft}

\begin{flushleft}
I also wish to thank Prof. Madhavan Mukund, Prof. Aiswarya Cyriac, Prof. S P Suresh and Prof. Narayan Kumar for the wonderful courses in Logic, Automata Theory and Verification that I took under them. 
\end{flushleft}

    \tableofcontents

    \setcounter{page}{1}
    \pagenumbering{arabic}

    \chapter{Introduction} 

Linear Temporal Logic (LTL) is a logic that is largely used to specify the properties/behaviour of a transition system. LTL is defined over boolean/propositional variables. Note that the values of these boolean variables range over the set $\{\top, \bot\}$. One way of extending LTL could be to define a logic over data variables whose values, in general could range over any infinite set. Now the models of such a logic would be valuation sequences of the variables over this infinite set. Again, there could be several such extensions possible. We are interested in two such extensions which we shall now breifly describe. \\

The propositional variables in the syntax of LTL usually stand for some properties. In case, the LTL formula is built to specify properties of a program, these propositions would capture the program states. In this case, these propositions could stand for properties like "the value of variable $x$ is non-negative", or that "the value of $x$ in the currrent state is less than the value of $y$ in the next state." In a natural extension of the linear temporal logic, one may consider allowing assertions directly on the value of variables, as in “$x \geq 0$” or “$x < \textnormal{\textbf{X}}y$”. The type of variables and the kind of constraints allowed, leads us to the study of constraint temporal logics. We are interested in constraint temporal logics which are parameterised by a constraint system $\mathcal{D}$ which comprises a concrete domain and an interpretation for relations, as introduced in \cite{DD07}.\\

The Constraint Linear Temporal Logic CLTL$(\mathcal{D})$ is essentially obtained from LTL by replacing propositions by atomic constraints in $\mathcal{D}$. In classical LTL variables represent propositions and models for its formulae are sequences of propositional valuations for these variables. These models can be viewed as having a “spatial” axis (here the elements true and false), along which the variables move. In constraint logics, the spatial axis for the models will comprise elements from the domain of $\mathcal{D}$. For example, with the constraint system $(\mathbb{N},<,\approx)$ one is allowed to use atomic constraints involving $<$ and $\approx$ (denotes the equality predicate), and variables which range over natural numbers. The formula $\textnormal{\textbf{G}}(x < y)$ in the logic parameterised by $(\mathbb{N},<,\approx)$ is interpreted over a sequence of $\mathbb{N}$-valuations for the variables $x$ and $y$, and asserts that at every point in the future, the value of the variable $x$ is less than the value of $y$. This formula is of course satisfiable, a candidate satisfying model being $ss...$, where the valuation $s$ assigns 1 to $x$ and 2 to $y$. Note that the constraints that are allowed in this logic are local constraints. This means that one cannot assert propositions like the value of $x$ at the current state is equal to the value of $y$ in some future state. \\

The other extension that we are interested in, is the Temporal Logic of Repeating Values, CLTL$^{\text{XF}}$ introduced in \cite{DDG12}. This logic allows us to reason about repetitions of data values from an infinite domain $D$. In this logic, the propositions are replaced by atomic constraints over an infinite set $D$. As before, the spatial axis for the models will comprise elements from the domain $D$. However, unlike in the case of CLTL$(\mathcal{D})$, the only predicate allowed in these constraints is the equality predicate $(\approx)$. But, as opposed to CLTL$(\mathcal{D})$, both local and remote constraints are allowed in this logic.  That is, one can assert that the value of $x$ is equal to the value of $y$ after 5 states. One can also assert that the value of $x$ is equal to the value of $y$ in some future state. \\

In this thesis, we shall consider a natural extension of the logics described above. That is, we would consider a logic that allows both local and remote constraints, and also allows the $<$ predicate. However, as we shall see, we must be careful while defining such a logic. If we construct a logic that allows us to assert propositions such as: "The value of $x$ is less than the value of $y$ at some future position", then the logic turns out to be undecidable, as we shall see in this thesis. However, we see that if we construct a careful natural extension of the two logics by defining it over a disjoint union of local and remote variables, in such a manner that: both local and remote, $\approx$ comparisons are allowed over the remote variables; only local, $<$ and $\approx$ comparisions are allowed over the local variables and; no interaction is allowed between the local and remote variables, then, the satisfiability problem for the logic so constructed turns out to be decidable. We shall even see that allowing for any interaction between the local and remote variables in this logic leads to undecidability. \\

\textnormal{\textbf{Thesis Outline}}: In Chapter 2, we present the preliminaries which we will
need throughout the thesis. We present the syntax and semantics of the logics CLTL$(\mathcal{D})$ (\cite{DD07}) and CLTL$^{\text{XF}}$ (\cite{DDG12}), recall some necessary definitions and present some of the important results related to the satisfiability problems for these logics. In Chapter 3, we introduce the logic CLRV$^{\textnormal{XF}}$($\mathcal{D}$) and using previous results, prove that the satisfiability problem for this logic is decidable over dense, open domains. In Chapter 4, we introduce the logics LRC$^{\top}(\mathcal{D})$ and CLRV$^{\textnormal{XF},\textnormal{XF}^{-1}, \textnormal{int}}$($\mathcal{D}$) and prove that the satisfiability problems for these logics is undecidable. We mention some of the future works possible in the Conclusion.

    \chapter{Preliminaries}

In this chapter, we first formally define the syntax and semantics of constraint LTL (\cite{DD07}) and temporal logic of repeating values (\cite{DDG12}). Then we shall be stating some important results related to the satisfiability problems for these logics which have been proved in previous papers. This shall be useful in the subsequent chapters where we shall be considering extensions of these logics and their satisfiability problems. The decidability/undecidability results that we shall be proving for these extensions will be built based on the the results discussed in this chapter.\\

The set of natural numbers, integers, rationals, and reals, will be denoted by $\mathbb{N}$, $\mathbb{Z}$, $\mathbb{Q}$, and $\mathbb{R}$ respectively. An infinite word (or sequence) over an alphabet $\Sigma$ is a function $\alpha : \mathbb{N} \to \Sigma$, written as $\alpha(0)\alpha(1)...$. For an infinite word $\alpha$, we will use $\alpha[i, j]$ to denote the finite word $\alpha(i)\alpha(i + 1)...\alpha(j)$, and $\alpha[i,\infty)$ to denote the infinite suffix $\alpha(i)\alpha(i + 1)...$. The set of all infinite words over $\Sigma$ is denoted by $\Sigma^{\omega}$. For a finite word $\tau$ and a finite or infinite word $\gamma$, we use $\tau.\gamma$ to denote the concatenation of $\tau$ and $\gamma$, and $\tau^{\omega}$ to denote the infinite word $\tau.\tau...$.

\section{Constraint LTL}

This logic is an extension of the classical propositional linear temporal logic. But here, instead of boolean variables, the logic will be defined over data variables whose values could range over an infinite domain. In the syntax of this logic, the propositional letters of LTL will be replaced by constraints over these data variables. 

Before we formally begin defining the syntax and semantics of this logic, let us recall a few terms first (\cite{DD07}).

A constraint system $\mathcal{D}$ is of the form $(D, R_1 . . . , R_n, \mathcal{I})$, where $D$ is a non-empty set referred to as the domain, and each $R_i$ is a predicate symbol of arity $a_i$ , with $\mathcal{I}(Ri) \subseteq D^{a_i}$ being its interpretation. We will suppress the mention of $\mathcal{I}$ whenever it is clear from the context. Let us fix such a constraint system $\mathcal{D}$ for the rest of this section.\\

An (atomic) $\mathcal{D}$-constraint over a finite set of variables $U$ is of the form $R(x_1, . . . , x_a)$, where $R$ is a predicate symbol of arity $a$ in $\mathcal{D}$ and each $x_i \in U$. A $D$-valuation over $U$ is a map $s : U \to D$. We will use the notation $val_D(U)$ to denote the set of $D$-valuations over $U$. Let $c = R(x_1, . . . , x_a)$ be a $\mathcal{D}$-constraint over $U$, and let $s$ be a $D$-valuation over $U$. Then we say $s$ satisfies $c$, written $s \models_{D} c$ iff $(s(x1), . . . , s(xa)) \in \mathcal{I}(R)$.\\

Formally, an (atomic) $\mathcal{D}$-term constraint over a set of variables $U$, is of the form
$R(X^{n_1}x_1, . . . ,X^{n_a}x_a)$ where $x_1, . . . , x_a \in U$, and $n_1, . . . , n_a \in \mathbb{N}$. Here $X^{i}$ stands for the juxtaposition of the next-state operator $X$ $i$ times, with $X^{0}x$ representing just $x$. A $\mathcal{D}$-term constraint is interpreted over a sequence of $D$-valuations. A $D$-valuation sequence over a set of variables $U$ is an infinite sequence $\sigma$ of $D$-valuations over U. We say a $D$-valuation sequence $\sigma$ over $U$ satisfies a $\mathcal{D}$-term constraint $c = R(X^{n_1}x_1, . . . ,X^{n_a}x_a)$ over $U$, written $\sigma \models_{D} c$ iff $(\sigma(n_1)(x_1), . . . , \sigma(n_a)(x_a)) \in \mathcal{I}(R)$.

We now introduce CLTL($\mathcal{D}$), the constraint Linear-Time Temporal Logic parametrised by the
constraint system $\mathcal{D}$. Let Var be a countably infinite set of variables, which we fix for the rest of this thesis. The syntax of CLTL($\mathcal{D}$) is given by:

\begin{equation*}
    \phi ::= c|\> \phi \vee \phi \>|\> \neg \phi \>|\> \textnormal{\textbf{X}} \phi \>|\> \phi\textnormal{\textbf{U}}\phi
\end{equation*}

where $c$ is any $\mathcal{D}$-term constraint over the set of variables $Var$.\\

Models for CLTL($\mathcal{D}$) formulas are $D$-valuation sequences over the variables $Var$.  Let $\phi$ be a CLTL($\mathcal{D}$) formula, and $\sigma$ be D-valuation sequence over $Var$. The satisfaction relation $\models$ is defined inductively below:

\begin{itemize}
    \item $\sigma, i \models c$ iff $\sigma[i, \infty) \models_{D} c$
    
    \item $\sigma, i \models \phi \vee \phi^{\prime}$ iff $\sigma, i \models \phi$ or $\sigma, i \models \phi^{\prime}$
    
    \item $\sigma, i \models \neg \phi$ iff $\sigma, i \not \models  \phi$
    
    \item $\sigma, i \models \textnormal{\textbf{X}} \phi$ iff $\sigma, i+1 \models \phi$
    
    \item $\sigma, i \models \phi\textnormal{\textbf{U}}\phi^{\prime}$ iff there is $i \leq j$  such that $\sigma, j \models \phi^{\prime}$ and for every $i \leq l < j$, we have $\sigma, l \models \phi$
    
\end{itemize}

We write $\sigma \models \phi$ if $\sigma, 0 \models \phi$. We shall use the standard derived temporal operators (\textbf{G}, \textbf{F}), and derived Boolean operators$ (\vee, \Rightarrow)$.

A constraint system $\mathcal{D}$ is said to satisfy the completion property if, essentially, given a consistent set of constraints $X$ over a set of variables $U$, any partial valuation which respects the constraints in $X$ involving only the assigned variables, can be extended to a valuation which respects all the constraints in $X$ .

The constraint system $(\mathbb{R},<, \approx)$ is one example of a system that satisfies the completion property.

An example of a constraint system which does not satisfy the completion property is $(\mathbb{Z},<,\approx)$, since for the set of constraints $X = \{x < y,x < z,z < y, x = x, y = y, z = z\}$ over the set of variables $U = \{x, y, z\}$, the partial valuation $s : x \mapsto 0, y \mapsto 1$ satisfies the constraints in $X$ involving $x$ and $y$, but cannot be extended to a valuation which satisfies the constraints $x < z$ and $z < y$ in $X$.

Let $\mathcal{D} = (D, R_1, . . . , R_n)$ be a constraint system. The satisfiability problem for CLTL($\mathcal{D}$) is: Given a CLTL($\mathcal{D}$) formula $\phi$, does there exist a $D$-valuation sequence which satisfies $\phi$? We consider this question for the specific case of constraint systems of the form $\mathcal{D} = (D, <, \approx)$. \\

We say a constraint system, $\mathcal{D} = (D, <, \approx)$ is dense (with respect to the ordering $<$) if for each $d, d^{\prime \prime}$ in $D$ with $d < d^{\prime \prime}$ , there exists $d^{\prime}$ in $D$, such that $d < d^{\prime} < d^{\prime \prime}$. We say $D$ is open iff for each $d^{\prime} \in D$, there exist $d, d^{\prime \prime} \in D$ with $d < d^{\prime} < d^{\prime \prime}$.\\

A frame over a set of variables $U$, w.r.t. a constraint system $\mathcal{D}$, is essentially a maximally consistent set of $\mathcal{D}$-constraints over $U$. More precisely, let $frame_{\mathcal{D}}(s, U)$ denote the (possibly empty) set of $\mathcal{D}$-constraints over $U$ satisfied by a $D$-valuation $s$ over $U$. Then a set of $\mathcal{D}$-constraints $X$ is a frame over the variables $U$, w.r.t. $\mathcal{D}$, if there exists a $D$-valuation $s$ over $U$ such that $X = frame_{\mathcal{D}}(s, U)$.

The frame checking problem for a constraint system $\mathcal{D}$ is as follows: Given a set of $\mathcal{D}$-constraints $X$ , and a finite set of variables $U$, is $X$ a frame over $U$, w.r.t. $\mathcal{D}$?

Now, we have the following theorems from \cite{DD07}:

\begin{lemma}
If a constraint system $\mathcal{D}$ satisfies the completion property and allows frame checking in PSPACE, then the satisfiability problem for CLTL($\mathcal{D}$) is in PSPACE. (\cite[Theorem 4.4]{DD07})
\end{lemma}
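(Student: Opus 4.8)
The plan is to reduce the satisfiability problem for CLTL($\mathcal{D}$) to a reachability/emptiness question on a suitable B\"uchi automaton whose alphabet is a finite abstraction of the (in general infinite) set of $D$-valuations. The key idea is that a CLTL($\mathcal{D}$) formula $\phi$ only refers to finitely many variables and to finitely many applications of \textbf{X}; so, writing $k$ for one more than the largest nesting of \textbf{X}-operators occurring in term constraints of $\phi$, and $U_\phi$ for the variables occurring in $\phi$, what matters about a $D$-valuation sequence $\sigma$ at position $i$ is only the tuple of frames $(\mathit{frame}_{\mathcal{D}}(\sigma(i),U_\phi),\ldots,\mathit{frame}_{\mathcal{D}}(\sigma(i+k-1),U_\phi))$ together with enough information to know these overlapping windows are mutually consistent. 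First I would make this abstraction precise: define a \emph{symbolic model} to be an $\omega$-word over the finite alphabet $\Gamma$ consisting of all $k$-tuples of frames over $U_\phi$ (frames are sets of $\mathcal{D}$-constraints, hence finitely many), subject to a local compatibility condition between consecutive windows. The first nontrivial claim is that $\phi$ is satisfiable by some $D$-valuation sequence iff there is a symbolic model that is both \emph{locally consistent} (consecutive windows agree on their overlap, and the whole window is a genuine frame over the larger variable set $U_\phi \times \{0,\ldots,k-1\}$) and \emph{temporally consistent} with $\phi$ (the Boolean/temporal structure of $\phi$, read over the frame-labels, is satisfied in the usual LTL sense).

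The next step is to argue both directions of this equivalence. The forward direction is routine: given a satisfying $\sigma$, read off the sequence of frame-windows; local consistency holds by construction, and temporal consistency follows by induction on the structure of $\phi$ using the semantics of $\models$ and the fact that each atomic term constraint's truth at position $i$ is determined by the window at $i$. The backward direction is where the completion property enters, and this is the main obstacle. Given a locally and temporally consistent symbolic model $w = w(0)w(1)\cdots$, I must build an actual $D$-valuation sequence $\sigma$ realizing it. I would do this position by position (or window by window): having already fixed $\sigma(0),\ldots,\sigma(i)$ consistently with the windows seen so far, I want to choose $\sigma(i+1)$ so that the window at position $i+1-(k-1)$, i.e.\ $(\sigma(i+2-k),\ldots,\sigma(i+1))$, is exactly the prescribed frame $w(i+2-k)$. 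The already-chosen values $\sigma(i+2-k),\ldots,\sigma(i)$ give a partial valuation on $U_\phi\times\{0,\ldots,k-1\}$ (thinking of the window as one big variable set); local consistency guarantees this partial valuation satisfies all constraints of $w(i+2-k)$ among the already-assigned coordinates; and because $w(i+2-k)$ is a frame over that larger variable set, the completion property lets me extend the partial valuation to a full one, which pins down $\sigma(i+1)$. Some care is needed at the very start (initializing the first $k-1$ valuations — here one uses that $w(0)$ is a frame, so a total valuation realizing it exists) and in checking that the extension at step $i+1$ is simultaneously consistent with \emph{all} windows containing position $i+1$, not just the leftmost one; but local consistency of $w$ plus uniqueness of the overlap handles this, since those windows impose the same constraints on the overlapping coordinates.

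Finally, I would phrase the search for a locally-and-temporally-consistent symbolic model as a B\"uchi automaton nonemptiness check. Build an automaton $\mathcal{A}_\phi$ over the alphabet $\Gamma$ whose states carry (i) a bounded window of recent frame-labels needed to test local consistency, and (ii) the usual LTL-style subformula-tracking component for $\phi$ (a subset of the closure of $\phi$, with the standard transition relation and generalized B\"uchi acceptance for the \textbf{U}-eventualities); atomic term constraints are evaluated against the frame-window in the current state. Then $\phi$ is satisfiable iff $L(\mathcal{A}_\phi)\neq\emptyset$. For the complexity bound: the LTL component contributes the familiar exponential blow-up, which is handled on the fly in PSPACE; the extra window component is of size $|\Gamma|^{O(k)}$, still singly exponential in $|\phi|$; and the transition relation is computable in PSPACE precisely because the frame-checking problem for $\mathcal{D}$ is assumed to be in PSPACE (this is invoked whenever we must verify that a proposed $k$-window is a genuine frame over $U_\phi\times\{0,\ldots,k-1\}$, and to verify local overlaps). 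Since B\"uchi nonemptiness is decidable in NLOGSPACE in the size of the automaton and the automaton is only exponential with a PSPACE-checkable transition relation, Savitch's theorem (or a direct on-the-fly guess-and-check argument) gives membership in PSPACE, as claimed. The hard part, to reiterate, is the faithfulness of the symbolic abstraction in the backward direction — concretely, the inductive extension of partial valuations across overlapping windows — which is exactly the step for which the completion property was introduced.
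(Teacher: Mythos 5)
Your reconstruction is correct and follows essentially the same route as the source this lemma is imported from (Demri--D'Souza, Theorem 4.4), which the thesis itself does not re-prove but whose machinery --- $k$-frames over $U_\phi\times\{0,\dots,k-1\}$, one-step consistency of overlapping windows, realization of symbolic models via the completion property (Lemma 4.2 of that paper, restated here as Lemma 2.2.1), and on-the-fly B\"uchi emptiness with PSPACE frame checking --- it reuses in Chapter 3. Your one initial imprecision (describing the alphabet as tuples of per-position frames rather than frames over the product variable set) is corrected in the same sentence, so the argument stands as written.
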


\begin{lemma}
Let $\mathcal{D}$ be a constraint system of the form $\mathcal{D} = (D, <, \approx)$ where D is infinite and $<$ is a total order. Then, $\mathcal{D}$ satisfies completion iff $\mathcal{D}$ is dense and open. (\cite[Lemma 5.3]{DD07})
\end{lemma}

From these, we can conclude the following:
\begin{theorem}
Let $\mathcal{D}$ be a constraint system of the form $\mathcal{D} = (D, <, \approx)$ where D is infinite and $<$ is a total order. Let $\mathcal{D}$ be dense and open. Then the satisfiability problem for CLTL($\mathcal{D}$) is decidable. In particular, the satisfiability problem for CLTL($(\mathbb{R}, <, \approx)$) is decidable.
\end{theorem}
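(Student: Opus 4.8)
The plan is to derive Theorem 2.1 directly by chaining together Lemma 2.2 and Lemma 2.3 from \cite{DD07}, which have already been stated. First I would observe that the satisfiability statement for CLTL$(\mathcal{D})$ has two hypotheses in Lemma 2.2: that $\mathcal{D}$ satisfies the completion property, and that $\mathcal{D}$ allows frame checking in PSPACE. So the proof naturally splits into verifying these two hypotheses for a constraint system $\mathcal{D} = (D,<,\approx)$ with $D$ infinite, $<$ a total order, and $\mathcal{D}$ both dense and open.

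For the completion property, I would simply invoke Lemma 2.3: since $D$ is infinite, $<$ is total, and $\mathcal{D}$ is dense and open, the ``if'' direction of Lemma 2.3 gives completion immediately. For the frame-checking hypothesis, I would argue that for constraint systems over the signature $\{<,\approx\}$, deciding whether a set $X$ of atomic constraints over a finite variable set $U$ is a frame amounts to checking that $X$ is a maximal consistent set: every pair of variables is related by exactly one of $x<y$, $y<x$, or $x\approx y$, and these relations are compatible with a total dense open order --- which, because density and openness impose no finite obstruction (unlike the $\mathbb{Z}$ counterexample, where a ``gap'' constraint $x<z<y$ with no room between consecutive integers fails), reduces to checking transitivity and totality of the induced relation on $U$. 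This is a polynomial-time check on the $O(|U|^2)$ constraints, hence certainly in PSPACE. Combining, Lemma 2.2 yields that satisfiability for CLTL$(\mathcal{D})$ is in PSPACE, and in particular decidable.

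Finally, for the concrete instance I would note that $(\mathbb{R},<,\approx)$ has $\mathbb{R}$ infinite, $<$ a total order, and is visibly dense (between any two reals lies their average) and open (every real has a smaller and a larger real), so the general statement applies and CLTL$((\mathbb{R},<,\approx))$ is decidable.

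The main obstacle I anticipate is not the logical chaining, which is routine, but making the frame-checking claim precise: one has to be careful that over a dense open total order the consistency of a finite set of $<$/$\approx$ constraints really is captured by transitivity plus totality of the induced relation, with no additional ``spacing'' constraints of the kind that break the $\mathbb{Z}$ case. Since Lemma 2.2 is quoted as a black box, I would keep this argument brief and lean on the completion property (already established via Lemma 2.3) to justify that any locally consistent frame is globally realizable; the honest work is only in confirming the PSPACE bound on the syntactic frame-checking procedure, which is immediate once the characterization is stated.
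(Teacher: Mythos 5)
Your proposal is correct and follows essentially the same route as the paper, which simply chains the two quoted lemmas; in fact you are more careful than the paper, since you explicitly discharge the frame-checking-in-PSPACE hypothesis of Lemma 2.1.1, which the paper leaves implicit. One minor imprecision: density and openness are only needed for the completion property, not for frame checking --- over \emph{any} infinite total order a finite consistent total-preorder description of $U$ is realizable (e.g.\ $\{x<z,\,z<y,\,x<y\}$ is a frame even over $\mathbb{Z}$; the $\mathbb{Z}$ counterexample only breaks the \emph{extension} of a partial valuation), so your polynomial-time frame check goes through without appealing to density at all.
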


\section{Temporal Logic of Repeating Values}

In this section, we consider yet another extension of the classical propositional linear temporal logic. This extension enables one to reason about repetitions of data values from an infinite domain. As before, the logic is defined over a set of data variables whose values can range over an infinite set. Here, the only predicate that we are endowed with, is the equality predicate. But the syntax of this extension allows one to check if the value of a variable has been repeated somewhere in the future. Let us make it more precise. (\cite[Section 2]{DDG12}).

Let $Var = {x_1, x_2, . . .}$ be a countably infinite set of variables. We denote by CLTL$^{\text{XF}}$, the logic whose formulae are defined as follows: 

\begin{equation*}
    \phi ::= x \approx \textnormal{\textbf{X}}^{i} y \>|\> x \approx \textnormal{\textbf{XF}} y \>|\> \phi \wedge \phi \>|\> \neg \phi \>|\> \textnormal{\textbf{X}} \phi \>|\> \phi\textnormal{\textbf{U}}\phi \>|\> \textnormal{\textbf{X}}^{-1} \phi \>|\> \phi\textnormal{\textbf{S}}\phi
\end{equation*}

where $x, y \in Var$, and $i \in \mathbb{N}$. Formulae of the form either $x \approx \textnormal{\textbf{X}}^{i} y$ or $x \approx \textnormal{\textbf{XF}} y$ are said to be atomic and an expression of the form $\textnormal{\textbf{X}}^{i} x$ (abbreviation for $i$ next symbols followed by a variable) is called a term.

A valuation is defined to be a map from $Var$ to $\mathbb{N}$. A CLTL$^{\text{XF}}$ model is an infinite sequence $\sigma$ of valuations. All the subsequent developments
can be equivalently done with the domain $\mathbb{N}$ replaced by an infinite set $D$ since only equality tests are performed in the logics. For every model $\sigma$ and $i \geq 0$, the satisfaction relation $\models$ is defined inductively as follows:

\begin{itemize}
    \item $\sigma, i \models x \approx \textnormal{\textbf{X}}^{j} y$ iff $\sigma(i)(x) = \sigma(i + j)(y)$
    
    \item $\sigma, i \models x \approx \textnormal{\textbf{XF}} y$ iff there exists $j$ such that $i < j$ and $\sigma(i)(x) = \sigma(j)(y)$
    
    \item $\sigma, i \models \phi \wedge \phi^{\prime}$ iff $\sigma, i \models \phi$ and $\sigma, i \models \phi^{\prime}$
    
    \item $\sigma, i \models \neg \phi$ iff $\sigma, i \not \models  \phi$
    
    \item $\sigma, i \models \textnormal{\textbf{X}} \phi$ iff $\sigma, i+1 \models \phi$
    
    \item $\sigma, i \models \phi\textnormal{\textbf{U}}\phi^{\prime}$ iff there is $i \leq j$  such that $\sigma, j \models \phi^{\prime}$ and for every $i \leq l < j$, we have $\sigma, l \models \phi$
    
    \item $\sigma, i \models \textnormal{\textbf{X}}^{-1} \phi$ iff $i \geq 1$ and $\sigma, i-1 \models \phi$
    
    \item $\sigma, i \models \phi\textnormal{\textbf{S}}\phi^{\prime}$ iff there is $j \leq i$  such that $\sigma, j \models \phi^{\prime}$ and for every $j < l \leq i$, we have $\sigma, l \models \phi$
\end{itemize}

We write $\sigma \models \phi$ if $\sigma, 0 \models \phi$. We shall use the standard derived temporal operators (\textbf{G}, \textbf{F}), and derived Boolean operators$ (\vee, \Rightarrow)$. We also use the notation $\textnormal{\textbf{X}}^{i}x \approx \textnormal{\textbf{X}}^{j} y$ as an abbreviation for the formula $\textnormal{\textbf{X}}^{i}(x \approx \textnormal{\textbf{X}}^{j-i} y)$ (assuming without any loss of generality that $i \leq j$).

The satisfiability problem for CLTL$^{\text{XF}}$ is to check for a given CLTL$^{\text{XF}}$ formula $\phi$, whether there exists a model $\sigma$ such that $\sigma \models \phi$. 

Now we consider an extension CLTL$^{\textnormal{XF},\textnormal{XF}^{-1}}$  (section 9, \cite{DDG12}) be the extension of CLTL$^{\text{XF}}$ with atomic formulae of the form $x \approx \textnormal{\textbf{XF}}^{-1} y$. The satisfaction relation is extended as follows:

\begin{itemize}
    \item $\sigma, i \models x \approx \textnormal{\textbf{XF}}^{-1} y$ iff there exists $j \geq 0$ such that $j < i$ and $\sigma(i)(x) = \sigma(j)(y)$

\end{itemize}

The satisfiability problem for CLTL$^{\textnormal{XF},\textnormal{XF}^{-1}}$ is to check for a given CLTL$^{\textnormal{XF},\textnormal{XF}^{-1}}$ formula $\phi$, whether there exists a model $\sigma$ such that $\sigma \models \phi$. \\

Now we have the following results from \cite{DDG12}:

\begin{theorem}
The satisfiability problem for CLTL$^{\text{XF}}$ is decidable. (\cite[Theorem 4]{DDG12})
\end{theorem}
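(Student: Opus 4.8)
The plan is to reduce satisfiability of CLTL$^{\text{XF}}$ to the (decidable) emptiness problem for Büchi automata. Fix a formula $\phi$, let $V$ be the finite set of variables occurring in it, and let $l$ be the largest integer such that $\textnormal{\textbf{X}}^{l}$ appears in an atomic subformula $x \approx \textnormal{\textbf{X}}^{l} y$ of $\phi$ (take $l=0$ if there is none). The key step is to abstract an arbitrary model $\sigma$ (a sequence of $\mathbb{N}$-valuations) into a word $w_\sigma$ over a finite alphabet $\Sigma$: the letter $w_\sigma(i)$ records the equality type of the window $\sigma(i),\ldots,\sigma(i+l)$ --- the equivalence relation on the finite set $V\times\{0,\ldots,l\}$ telling which of the values $\sigma(i+j)(x)$ coincide --- together with, for each atom $x \approx \textnormal{\textbf{XF}} y$ of $\phi$, the truth value of that atom at position $i$. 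From $w_\sigma(i)$ one can read off the truth value at $i$ of every atomic subformula of $\phi$ (atoms $x \approx \textnormal{\textbf{X}}^{i'} y$ from the equality type, since $i' \le l$, and the $\textnormal{\textbf{XF}}$-atoms directly), and there are only finitely many such letters.

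I would then build, over $\Sigma$, the product of two Büchi automata. The first, $\mathcal{A}_{\mathrm{tmp}}$, comes from the standard translation into Büchi automata of the LTL formula (with past operators, which stays $\omega$-regular) obtained from $\phi$ by replacing each atom by a propositional letter standing for ``this atom is true in the current $\Sigma$-symbol''; it checks that the temporal skeleton of $\phi$ holds along the word. The second, $\mathcal{A}_{\mathrm{real}}$, checks that a word $w$ is \emph{realizable}, i.e. equal to $w_\sigma$ for some model $\sigma$. Its finite state tracks, for each equivalence class of values present in the current window, (i) the set of variables $y$ that this value is forbidden to take in the strict future --- forced whenever an atom $x \approx \textnormal{\textbf{XF}} y$ is declared \emph{false} at a position where $x$ currently carries this value --- and (ii) the still-unmet obligations ``this value must occur as $\sigma(j)(y)$ for some future $j$'', created whenever such an atom is declared \emph{true}. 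The transition relation enforces, as a safety condition, that consecutive windows agree on their overlap and that these per-class annotations are propagated coherently (in particular, a value is never simultaneously obliged to become $y$ and forbidden to become $y$), and a generalized Büchi condition enforces, as a liveness condition, that every obligation is eventually met. Emptiness of $\mathcal{A}_{\mathrm{tmp}} \times \mathcal{A}_{\mathrm{real}}$ is decidable, so it suffices to prove that this product is non-empty if and only if $\phi$ is satisfiable.

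The completeness direction ($\phi$ satisfiable $\Rightarrow$ product non-empty) is routine: given $\sigma \models \phi$, the abstraction $w_\sigma$ is accepted by both automata essentially by construction. The soundness direction is the crux and the main obstacle. From an accepting run of the product on a word $w$ I would build a concrete model $\sigma$ by a left-to-right pass: maintain an assignment of distinct naturals to the finitely many currently ``live'' value-classes (those in the current window or carrying an unmet obligation); at each new position reuse the naturals forced by the window overlap, give genuinely fresh naturals to all new classes, and discharge a pending obligation on $v$ by setting the appropriate $y$ at the first position the run permits. The work is in showing this yields a model of $\phi$, which reduces to checking that the bookkeeping of $\mathcal{A}_{\mathrm{real}}$ is exactly right: that propagation of the ``forbidden'' sets along equality chains (across window overlaps and through obligation-witnessing steps) guarantees no false $\textnormal{\textbf{XF}}$-atom is accidentally made true and no fresh value ever lands in a forbidden slot, and that the liveness condition guarantees every true $\textnormal{\textbf{XF}}$-atom gets a concrete witness --- all of which must be packaged as an invariant maintained by the construction. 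Since the logic uses only equality, fresh values are always available, so the only thing that can go wrong is a conflict already visible to the finite automaton; stating the invariant so that this is manifest, and checking its preservation position by position, is the technical heart of the argument. With both automata effectively constructible and Büchi emptiness decidable, decidability of the satisfiability problem for CLTL$^{\text{XF}}$ follows.
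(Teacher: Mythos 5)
Your overall architecture --- abstracting a model into a sequence of frames that record the equality type of each length-$(l+1)$ window together with the truth values of the $\textnormal{\textbf{XF}}$-atoms, running a Vardi--Wolper automaton for the temporal skeleton, intersecting with an automaton for realizability, and testing emptiness --- is exactly the architecture of the proof in \cite{DDG12} that this theorem cites (and that Chapter 3 of this thesis reuses). But there is a genuine gap at the decisive step: the set of realizable symbolic models is \emph{not} $\omega$-regular, so no finite-state B{\"u}chi automaton $\mathcal{A}_{\mathrm{real}}$ of the kind you describe can exist. The obstruction is precisely the bookkeeping you defer to the end. An obligation created at position $i$ (``the value $\sigma(i)(x)$ must reappear as $y$'') survives after that value has left the window, and unboundedly many pairwise-distinct such values can be pending at once: assert at each of $n$ consecutive positions both $x \approx \textnormal{\textbf{XF}} y$ and $\neg(x \approx \textnormal{\textbf{XF}} x)$; the prohibitions force the $n$ values of $x$ to be pairwise distinct, and each must then be discharged at a \emph{separate} future position, since a single position supplies only one value of $y$. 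Realizability thus requires comparing two unbounded quantities (pending obligations versus available discharge positions), which a pumping argument converts into a proof of non-regularity. Your own reconstruction pass betrays the issue: the ``live'' classes you must track include every class ``carrying an unmet obligation'', a set of unbounded size, while your automaton's state records only classes present in the current window. The claim that ``the only thing that can go wrong is a conflict already visible to the finite automaton'' is exactly what fails.

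The repair is the actual content of \cite{DDG12}: realizability is recognized not by a B{\"u}chi automaton but by a \emph{simple counter automaton} (cf.\ Lemma 2.2.4 of this thesis), whose counters record the number of pending obligations of each type and whose acceptance condition forces every obligation to be eventually discharged; intersecting it with the B{\"u}chi automata for one-step consistency and for the temporal skeleton again yields a simple counter automaton (Lemma 2.2.6), and decidability follows because emptiness for simple counter automata is decidable (\cite[Lemma 6]{DDG12}). With $\mathcal{A}_{\mathrm{real}}$ so replaced, the remainder of your argument --- the abstraction, the completeness direction, and the left-to-right construction of a concrete model from an accepting run using fresh values --- goes through essentially as you sketch it.
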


\begin{theorem}
The satisfiability problem for CLTL$^{\textnormal{XF},\textnormal{XF}^{-1}}$ is decidable. (\cite[Theorem 6(I)]{DDG12})
\end{theorem}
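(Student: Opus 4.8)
The plan is to extend the automata-theoretic decision procedure that underlies the decidability of CLTL$^{\text{XF}}$ (the previous theorem, from \cite{DDG12}) so that it also copes with the past repeating-value atoms $x \approx \textnormal{\textbf{XF}}^{-1} y$; equivalently, one aims at an effective, satisfiability-preserving translation of CLTL$^{\textnormal{XF},\textnormal{XF}^{-1}}$ into CLTL$^{\text{XF}}$ in which only the $\textnormal{\textbf{XF}}^{-1}$-atoms have to be eliminated. Fix a formula $\phi$ and let $N$ be the largest $i$ such that $\textnormal{\textbf{X}}^i$ occurs in $\phi$. I would abstract a model $\sigma$ by an $\omega$-sequence of \emph{symbolic frames}, where frame $k$ records: (i) the equality/disequality pattern among the terms $\textnormal{\textbf{X}}^j x$ ($0 \le j \le N$, $x$ a variable of $\phi$) evaluated from position $k$; (ii) the usual LTL subformula type at $k$, i.e. which \textbf{U}-, \textbf{S}- and \textbf{X}-subformulae of $\phi$ hold there; and (iii) the truth value at $k$ of every atom $x \approx \textnormal{\textbf{XF}} y$ and $x \approx \textnormal{\textbf{XF}}^{-1} y$ of $\phi$. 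Making each frame locally closed and defining a transition relation on consecutive frames that propagates the LTL eventualities and the fixed-distance data equalities is done exactly as for CLTL$^{\text{XF}}$; since frames range over a finite set this produces a B\"uchi automaton whose accepting runs are the candidate \emph{symbolic models}.

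The new work is in the repeating-value atoms. The future atoms $x \approx \textnormal{\textbf{XF}} y$ are already available in CLTL$^{\text{XF}}$, so they can be kept verbatim; in the automaton they are governed, as in the CLTL$^{\text{XF}}$ proof, by a (generalised) B\"uchi condition forcing every asserted $x \approx \textnormal{\textbf{XF}} y$ to be eventually discharged by a genuine later occurrence of that value on $y$. For the past atoms I would enrich each frame with, for every pair $(y,z)$ of variables occurring in some $\textnormal{\textbf{XF}}^{-1}$-atom, a bit asserting ``the current value of $z$ has already occurred as the value of $y$ strictly earlier'', together with two transition constraints: a \emph{consistency} constraint linking these bits to the equality pattern inside the frame (so that $z \approx \textnormal{\textbf{XF}}^{-1} y$ holds exactly when the bit is on), and a \emph{monotone propagation} constraint saying that once a value recorded in frame $k$ carries over to a term of frame $k+1$, or $\sigma(k)(y)$ itself recurs, the relevant bit at $k+1$ is switched on and never switched off. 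All of this is finite-state, so the candidate symbolic models remain $\omega$-regular.

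The heart of the argument --- and the step I expect to be the main obstacle --- is the \emph{realizability lemma}: a candidate symbolic model is accepted by the automaton iff it is the abstraction of an actual CLTL$^{\textnormal{XF},\textnormal{XF}^{-1}}$ model. One direction is routine bookkeeping. For the converse one must concretise the symbolic model, choosing data values from the infinite domain so that (a) every positive $\textnormal{\textbf{XF}}$- and positive $\textnormal{\textbf{XF}}^{-1}$-atom flagged in a frame is actually witnessed, (b) no negative atom is accidentally satisfied, and (c) all fixed-distance equalities hold. Negative atoms of either kind are met by always preferring a \emph{fresh} value, which is legitimate whenever the recorded equality pattern does not force that term to coincide with anything already used; positive $\textnormal{\textbf{XF}}$-atoms are met by the acceptance condition; positive $\textnormal{\textbf{XF}}^{-1}$-atoms are met by \emph{re-using} an earlier value, the monotone propagation constraint providing a chain of equalities back to a legitimate earlier occurrence on the right variable. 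The delicate case is a value that appears on some variable, then disappears from all recorded terms for a while, and must later re-appear to witness an $\textnormal{\textbf{XF}}^{-1}$-atom: threading the propagation bits across such a gap with only finite memory seems to require a normal-form argument --- for instance, showing it is sound to always witness an $\textnormal{\textbf{XF}}^{-1}$-atom by an occurrence inside a bounded window, or to ``pin'' boundedly many still-relevant values inside the frame. Once the realizability lemma is established, $\phi$ is satisfiable iff the associated B\"uchi automaton is non-empty, which is decidable; the same construction can also be presented as a satisfiability-preserving translation into CLTL$^{\text{XF}}$ over auxiliary tracking variables, so that the decidability of CLTL$^{\text{XF}}$ (\cite{DDG12}) can be invoked directly.
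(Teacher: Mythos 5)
The paper does not actually prove this statement: it is imported verbatim from Demri, D'Souza and Gascon as \cite[Theorem 6(I)]{DDG12}, so there is no in-thesis proof to compare against. Judged on its own terms, your sketch follows the right general template (symbolic frames recording the equality pattern over a bounded window, one-step consistency, a product with a Vardi--Wolper-style automaton for the temporal skeleton, and a realizability lemma), but it contains a genuine gap at exactly the point you yourself flag as the main obstacle. You assert that the bookkeeping for the $x \approx \textnormal{\textbf{XF}} y$ and $x \approx \textnormal{\textbf{XF}}^{-1} y$ atoms ``remains $\omega$-regular'' and that satisfiability therefore reduces to B\"uchi non-emptiness. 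That is not how the known proof goes, and it cannot work: the set of \emph{realizable} symbolic models is not $\omega$-regular, because a frame sequence can open an unbounded number of pending repetition obligations (distinct values each promised to reappear on some variable), and certifying that all of them are eventually discharged --- while no negated atom is accidentally violated --- cannot be done with finite memory. This is precisely why \cite{DDG12}, and Chapter 3 of this thesis through $\mathcal{A}_{real}$ and the appeal to \cite[Lemma 6]{DDG12}, work with \emph{simple counter automata}, whose non-emptiness problem is decidable but which are strictly more powerful than B\"uchi automata; the counters track the pending obligations.

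Your two proposed escape routes do not close this gap. Witnessing an $\textnormal{\textbf{XF}}^{-1}$-atom inside a bounded window is unsound: a formula can force every admissible past witness to lie arbitrarily far back, for instance by requiring all values of $y$ after position $0$ to be fresh while demanding at position $i$ that the current value of $x$ equal some earlier value of $y$. And the claimed satisfiability-preserving translation of $\textnormal{\textbf{XF}}^{-1}$ into CLTL$^{\text{XF}}$ over auxiliary tracking variables is asserted rather than constructed; no such elimination is available off the shelf (the translation cited elsewhere in the thesis, from \cite{SDM16}, removes test formulae $\langle \phi? \rangle$ but keeps the past obligations). To repair the argument you should replace the B\"uchi realizability automaton by the counter-automaton construction of \cite{DDG12}, extended to handle the past obligations as in their Theorem 6, and then invoke decidability of non-emptiness for simple counter automata.
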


    \chapter{Decidability of an extension of CLTL$^{\textbf{XF}}$}
\section{Introduction}
Formulae of the logic $\text{CLTL}^{\text{XF}}$ (described in the previous chapter) introduced in \cite{DDG12} are defined by the grammar
below:
\begin{equation*}
    \phi ::= x \approx \textnormal{\textbf{X}}^{i} y \>|\> x \approx \textnormal{\textbf{XF}} y \>|\> \phi \wedge \phi \>|\> \neg \phi \>|\> \textnormal{\textbf{X}} \phi \>|\> \phi\textnormal{\textbf{U}}\phi \>|\> \textnormal{\textbf{X}}^{-1} \phi \>|\> \phi\textnormal{\textbf{S}}\phi
\end{equation*}

Let $V = V_{\text{local}} \biguplus V_{\text{remote}}$ be the set of variables. Let us consider a fragment of LTL with data comparisons, which extends
$\text{CLTL}^{\text{XF}}$ by addition of local constraints with $<$ or $>$ with the following restrictions:
\begin{itemize}
    \item atomic formulae of the form $x \approx \textnormal{\textbf{XF}} y$ are allowed only if $x, y \in V_{\text{remote}}$.
    \item atomic formulae of the form $x \approx \textnormal{\textbf{X}}^{i} y$ are allowed if $\{x, y\} \subseteq V_{\text{local}}$ or if $\{x, y\} \subseteq V_{\text{remote}}$.
   \item atomic formulae of the form $x < \textnormal{\textbf{X}}^{i} y$ are allowed only if $x, y \in V_{\text{local}}$.
\end{itemize}
The formal syntax of this extension is given by the following grammar : 
\begin{align*}
     \phi :: = \hspace{0.06 cm} & x_{\text{loc}} \approx \textnormal{\textbf{X}}^{i} y_{\text{loc}} \hspace{0.06cm} |\hspace{0.06cm} x_{\text{loc}} < \textnormal{\textbf{X}}^{i} y_{\text{loc}} \hspace{0.06cm}|\hspace{0.06cm} x_{\textnormal{rem}} \approx \textnormal{\textbf{X}}^{i} y_{\textnormal{rem}\>} \hspace{0.06cm} | \hspace{0.06cm} x_{\text{rem}} \approx \textnormal{\textbf{XF}} y_{\text{rem}} \hspace{0.06cm}  \\ 
     &| \hspace{0.06cm} \phi \wedge \phi \hspace{0.06cm}
     | \hspace{0.06cm} \neg \phi \hspace{0.06cm} | \hspace{0.06cm} \textnormal{\textbf{X}} \phi \hspace{0.06cm} | \hspace{0.06cm} \phi\textnormal{\textbf{U}}\phi \hspace{0.06cm} | \hspace{0.06cm} \textnormal{\textbf{X}}^{-1} \phi 
     |\hspace{0.06cm} \phi\textnormal{\textbf{S}}\phi, \\ & \text{ where } x_{\text{loc}}, y_{\text{loc}} \in V_{\text{local}},\hspace{0.12cm} x_{\text{rem}}, y_{\text{rem}} \in V_{\text{remote}}
\end{align*}
We call this extension CLRV$^{\textnormal{XF}}$($\mathcal{D}$) where the domain $\mathcal{D}$ is of the form $(D, <, \approx)$, where $D$ is an infinite set, `$<$' is any strict total ordering over $D$ and `$\approx$' is the equality relation.\\

A $D$-valuation is defined to be a map from $V$ to $D$. A $D_{\text{loc}}$-valuation is defined to be a map from $V_{\text{local}}$ to $D$. 
Similarly, a $D_{\text{rem}}$-valuation is defined to be a map from $V_{\text{remote}}$ to $D$. A local model is an infinite sequence $\sigma_l$ of $D_{\text{loc}}$-valuations and a remote model is an infinite sequence $\sigma_r$ of $D_{\text{rem}}$-valuations.
A model is an infinite sequence $\sigma$ of $D$-valuations. 

We write $\sigma \models \phi$ if $\sigma, 0 \models \phi$. We shall use the standard derived temporal operators (\textbf{G}, \textbf{F}), and derived Boolean operators$ (\vee, \Rightarrow)$. We also use the
notation $\textnormal{\textbf{X}}^{i}x \approx \textnormal{\textbf{X}}^{j} y$ as an abbreviation for the formula $\textnormal{\textbf{X}}^{i}(x \approx \textnormal{\textbf{X}}^{j-i} y)$ (assuming without any
loss of generality that $i \leq j$).\\

Now we consider the problem of satisfiability of this extension over dense and open domains (in particular, over $\mathbb{R}$)\\

\begin{problem}
Let $\mathcal{D} = (D, <, \approx)$ be a constraint system which is open and dense (defined in the previous chapter). The satisfiability problem for CLRV$^{\textnormal{XF}}$($\mathcal{D}$) is: Given a CLRV$^{\textnormal{XF}}$($\mathcal{D}$) formula $\phi$, does there exist a model which satisfies $\phi$?
\end{problem}

We shall now prove that the satisfiability problem for CLRV$^{\textnormal{XF}}$($\mathcal{D}$) is decidable. Let us begin by defining some notations.\\

Let $l$ be the maximal $i$ such that a term of the form $\textnormal{\textbf{X}}^{i} x_{\textnormal{loc}}$ occurs in $\phi$ (where $x_{\textnormal{loc}} \in V_{\textnormal{local}}$). The value $l$ is called the local \textbf{X}-length of $\phi$. \\

Let $r$ be the maximal $i$ such that a term of the form $\textnormal{\textbf{X}}^{i} x_{\textnormal{rem}}$ occurs in $\phi$ (where $x_{\textnormal{rem}} \in V_{\textnormal{remote}}$). The value $r$ is called the remote \textbf{X}-length of $\phi$. \\

We use $\Omega^{l}$ to denote the set of all constraints of the form $\textnormal{\textbf{X}}^{i} x_{\text{loc}} \approx \textnormal{\textbf{X}}^{j} y_{\text{loc}}$ or $\textnormal{\textbf{X}}^{i} x_{\text{loc}} < \textnormal{\textbf{X}}^{j} y_{\text{loc}}$, with $x_{\text{loc}} \in V_{\text{local}}$ and $i, j \in \{0,1,...l\}$. Similarly, $\Omega^{r}$ denotes the set of all constraints of the form $\textnormal{\textbf{X}}^{i} x_{\textnormal{rem}} \approx \textnormal{\textbf{X}}^{j} y_{\textnormal{rem}}$ or $\textnormal{\textbf{X}}^{i}(x_{\text{rem}} \approx \textnormal{\textbf{XF}} y_{\text{rem}})$, with $x_{\text{rem}} \in V_{\text{remote}}$ and $i, j \in \{0,1,...r\}$\\

Let $\sigma_l$ be a local $D_{\text{loc}}$-valuation sequence and let $l \in \mathbb{N}$. We define, $l$-$frame_{\mathcal{D}}(\sigma_l) = \{c \in \Omega^{l} \mid \sigma_l \models c\}$. An $l$-$frame$ with respect to $\mathcal{D}$ is a subset $FR^{l}$ of $\Omega^{l}$ such that $FR^{l} =$
$l$-$frame_{\mathcal{D}}(\sigma_l)$ for some $D_{\text{loc}}$-valuation sequence $\sigma_l$ (same as the definition of $k$-frame of \cite{DD07}) . Let the set of all $l$-frames be denoted by $Frame^l$.\\

An $r$-$frame$ is defined to be a set of constraints $FR^{r} \subseteq \Omega^{r}$ which satisfies the following constraints (same as the definition of $(l,k)$-frame of \cite{DDG12}) :
\begin{enumerate}
    \item For all $i \in \{0, . . . , r\}$ and $x \in V_{\textnormal{remote}}$, $\textnormal{\textbf{X}}^{i} x \approx \textnormal{\textbf{X}}^{i} x \in FR^{r}$ .
    \item
For all $i, j \in \{0, . . . , r\}$ and $x, y \in V_{\textnormal{remote}}$, $\textnormal{\textbf{X}}^{i} x \approx \textnormal{\textbf{X}}^{j} y \in FR^{r}$ iff $\textnormal{\textbf{X}}^{j} y \approx \textnormal{\textbf{X}}^{i} x \in FR^{r}$
\item
For all $i, j, j' \in \{0, . . . , r\}$ and $x, y, z \in V_{\textnormal{remote}}$, if $\{\textnormal{\textbf{X}}^{i} x \approx \textnormal{\textbf{X}}^{j} y$, $\textnormal{\textbf{X}}^{j} y \approx \textnormal{\textbf{X}}^{j'} z\} \subseteq FR^{r}$ then
$\textnormal{\textbf{X}}^{i} x \approx \textnormal{\textbf{X}}^{j'} z \in FR^{r}$
\item
For all $i, j \in \{0, . . . , r\}$ and $x, y \in V_{\textnormal{remote}}$ such that $\textnormal{\textbf{X}}^{i} x \approx \textnormal{\textbf{X}}^{j} y \in FR^{r}$ :
\begin{itemize}
    \item if $i = j$, then for every $z \in V_{\textnormal{remote}}$ we have $\textnormal{\textbf{X}}^{i}(x \approx \textnormal{\textbf{XF}}z) \in FR^{r}$ iff $\textnormal{\textbf{X}}^{j}(y \approx \textnormal{\textbf{XF}} z) \in FR^{r}$ ;
    
    \item if $i < j$ then $\textnormal{\textbf{X}}^{i}(x \approx \textnormal{\textbf{XF}}y) \in FR^{r}$ , and for $z \in V_{remote}$, $\textnormal{\textbf{X}}^{i}(x \approx \textnormal{\textbf{XF}}z) \in FR^{r}$ iff either
$\textnormal{\textbf{X}}^{j}(y \approx XFz) \in FR^{r}$ or there exists $i < j' \leq j$ such that $\textnormal{\textbf{X}}^{i} x \approx \textnormal{\textbf{X}}^{j'} z \in FR^{r}$ .
    
\end{itemize}
\end{enumerate}
Let the set of all $r$-frames be denoted by $Frame^{r}$. \\

We now define an $(l, r)$-$frame$ to be a set of constraints $FR \subseteq \Omega^{l} \cup \Omega^{r}$ of the form $FR^{l} \cup FR^{r}$ where $FR^{l}$ is an $l$-$frame$ and $FR^{r}$ is an $r$-$frame$. Let the set of all $(l, r)$-frames be denoted by $Frame^{(l,r)}$.\\

The local projection $FR^l$ of an $(l, r)$-frame $FR$ is defined as $FR^l = FR \cap \Omega^{l}$. The remote projection $FR^r$ of an $(l, r)$-frame $FR$ is defined as $FR^r = FR \cap \Omega^{r}$.\\

A pair of $l$-frames $(FR_1, FR_2)$ is one-step consistent (same as the definition of locally consistent $k$-frames of \cite{DD07}), if it satisfies the following conditions:
\begin{itemize}
    \item For all $\textnormal{\textbf{X}}^{i} x \approx \textnormal{\textbf{X}}^{j} y \in \Omega^{l}$
with $0 < i, j$, we have $\textnormal{\textbf{X}}^{i} x \approx \textnormal{\textbf{X}}^{j} y \in FR_1$ iff $\textnormal{\textbf{X}}^{i-1} x \approx \textnormal{\textbf{X}}^{j-1} y \in FR_2$,
    \item For all $\textnormal{\textbf{X}}^{i} x < \textnormal{\textbf{X}}^{j} y \in \Omega^{l}$
with $0 < i, j$, we have $\textnormal{\textbf{X}}^{i} x < \textnormal{\textbf{X}}^{j} y \in FR_1$ iff $\textnormal{\textbf{X}}^{i-1} x < \textnormal{\textbf{X}}^{j-1} y \in FR_2$,
\end{itemize}
A pair of $r$-frames $(FR_1, FR_2)$ is one-step consistent \cite{DDG12}, if it satisfies the following conditions:
\begin{itemize}
\item For all $\textnormal{\textbf{X}}^{i} x \approx \textnormal{\textbf{X}}^{j} y \in \Omega^{r}$
with $0 < i, j$, we have $\textnormal{\textbf{X}}^{i} x \approx \textnormal{\textbf{X}}^{j} y \in FR_1$ iff $\textnormal{\textbf{X}}^{i-1} x \approx \textnormal{\textbf{X}}^{j-1} y \in FR_2$,
\item For all $\textnormal{\textbf{X}}^{i}(x \approx \textnormal{\textbf{XF}}y) \in \Omega^{r}$
with $i > 0$, we have $\textnormal{\textbf{X}}^{i}(x \approx \textnormal{\textbf{XF}}y) \in FR_1$ iff $\textnormal{\textbf{X}}^{i-1}(x \approx \textnormal{\textbf{XF}}y) \in FR_2$.
\end{itemize}
A pair of $(l, r)$-frames $(FR_1, FR_2)$ is said to be one-step consistent, if the pair of local projections of $FR_1$ and $FR_2$ is one-step consistent and the pair of remote projections of $FR_1$ and $FR_2$ is also one-step consistent. \\

An $l$-symbolic model is an infinite sequence $\rho$ of $l$-frames such that for all $i \in \mathbb{N}$, the pairs $(\rho(i), \rho(i+1))$ are one-step consistent. An $r$-symbolic model is an infinite sequence $\rho$ of $r$-frames such that for all $i \in \mathbb{N}$, the pairs $(\rho(i), \rho(i+1))$ are one-step consistent. 
An $(l, r)$-symbolic model (or a symbolic model) is an infinite sequence $\rho$ of $(l, r)$-frames such that for all $i \in \mathbb{N}$, the pairs $(\rho(i), \rho(i+1))$ are one-step consistent. \\

The local projection of an $(l, r)$-symbolic model $\rho$ is the infinite sequence $\rho_l$ such that $\rho_l (i)$ is the local projection of $\rho(i)$ for all $i \in \mathbb{N}$. Similarly, the remote projection of an $(l, r)$-symbolic model $\rho$ is the infinite sequence $\rho_r$ such that $\rho_r (i)$ is the remote projection of $\rho(i)$ for all $i \in \mathbb{N}$. \\

We say a model $\sigma$ realizes a symbolic model $\rho$ (or equivalently that $\rho$ admits a model $\sigma$) iff for every $i \in \mathbb{N}$, we have $\rho(i) = \{\phi \in (\Omega^l \cup \Omega^r) \mid \sigma, i \models \phi\}$. We say a local model $\sigma_l$ realizes the local projection $\rho_l$ of a symbolic model $\rho$ (or equivalently that the local projection of $\rho$ admits a local model $\sigma_l$) iff for every $i \in \mathbb{N}$, we have $\rho_l(i) = \{\phi \in \Omega^l \mid \sigma, i \models \phi\}$. Similarly, a remote model $\sigma_r$ realizes the remote projection $\rho_r$ of a symbolic model $\rho$ (or equivalently the remote projection of $\rho$ admits a local model $\sigma_r$) iff for every $i \in \mathbb{N}$, we have $\rho_r(i) = \{\phi \in \Omega^r \mid \sigma, i \models \phi\}$.\\

It is easy to observe that the following lemma holds:
\begin{lemma}
A symbolic model $\rho$ admits a model $\sigma$ iff the local projection $\rho_l$ of $\rho$ admits a local model and the remote projection $\rho_r$ of $\rho$ admits a remote model.
\end{lemma}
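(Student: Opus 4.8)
The plan is to exploit the syntactic separation of the two kinds of variables. Since $V = V_{\text{local}} \uplus V_{\text{remote}}$ and no atomic constraint of CLRV$^{\textnormal{XF}}$($\mathcal{D}$) — and hence no constraint in $\Omega^l$ or $\Omega^r$ — mentions a local variable together with a remote variable, the truth of a constraint at a position of a $D$-valuation sequence $\sigma$ factors through the corresponding restriction of $\sigma$. Concretely, writing $\sigma_l$ for the sequence of restrictions of $\sigma(i)$ to $V_{\text{local}}$ and $\sigma_r$ for the sequence of restrictions to $V_{\text{remote}}$, I would first record the elementary fact that $\sigma, i \models \phi$ iff $\sigma_l, i \models \phi$ for every $\phi \in \Omega^l$, and $\sigma, i \models \phi$ iff $\sigma_r, i \models \phi$ for every $\phi \in \Omega^r$. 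Each element of $\Omega^l$ is a constraint $\textnormal{\textbf{X}}^i x_{\text{loc}} \approx \textnormal{\textbf{X}}^j y_{\text{loc}}$ or $\textnormal{\textbf{X}}^i x_{\text{loc}} < \textnormal{\textbf{X}}^j y_{\text{loc}}$ over local variables only, and each element of $\Omega^r$ is $\textnormal{\textbf{X}}^i x_{\text{rem}} \approx \textnormal{\textbf{X}}^j y_{\text{rem}}$ or $\textnormal{\textbf{X}}^i(x_{\text{rem}} \approx \textnormal{\textbf{XF}} y_{\text{rem}})$ over remote variables only, so this is immediate from the semantics without any induction on connectives.

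For the forward implication I would take a model $\sigma$ realizing $\rho$ and set $\sigma_l, \sigma_r$ as above. Then for each $i$, using that $\rho_l(i) = \rho(i) \cap \Omega^l$ together with the decomposition fact, $\rho_l(i) = \{\phi \in \Omega^l : \sigma, i \models \phi\} = \{\phi \in \Omega^l : \sigma_l, i \models \phi\}$, so $\sigma_l$ realizes $\rho_l$; symmetrically $\sigma_r$ realizes $\rho_r$. For the converse I would take a local model $\sigma_l$ realizing $\rho_l$ and a remote model $\sigma_r$ realizing $\rho_r$, and glue them: define $\sigma(i)(x) = \sigma_l(i)(x)$ for $x \in V_{\text{local}}$ and $\sigma(i)(x) = \sigma_r(i)(x)$ for $x \in V_{\text{remote}}$. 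This is a well-defined $D$-valuation sequence because the two variable sets are disjoint and both valuations land in the common domain $D$, and by construction the restriction of $\sigma$ to $V_{\text{local}}$ is $\sigma_l$ and to $V_{\text{remote}}$ is $\sigma_r$. Applying the decomposition fact once more gives $\{\phi \in \Omega^l : \sigma, i \models \phi\} = \rho_l(i)$ and $\{\phi \in \Omega^r : \sigma, i \models \phi\} = \rho_r(i)$; since $\rho(i)$ is an $(l, r)$-frame it equals $\rho_l(i) \cup \rho_r(i)$, and therefore $\{\phi \in \Omega^l \cup \Omega^r : \sigma, i \models \phi\} = \rho(i)$ for all $i$, i.e.\ $\sigma$ realizes $\rho$.

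I do not expect a genuine obstacle: the only point needing care is the decomposition fact of the first paragraph, which holds precisely because the ``no interaction between local and remote variables'' restriction is hard-wired into the grammar, so that $\Omega^l$ and $\Omega^r$ live entirely inside $V_{\text{local}}$ and $V_{\text{remote}}$ respectively. It is worth noting, though it is not strictly needed for this lemma, that one-step consistency for $(l, r)$-frames was defined componentwise, so $\rho_l$ is automatically an $l$-symbolic model and $\rho_r$ an $r$-symbolic model; the substantive work left for the rest of the chapter is then to decide realizability of $l$-symbolic and $r$-symbolic models separately, invoking the decidability results recalled in the previous chapter.
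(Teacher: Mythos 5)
Your restrict-and-glue argument is correct and is precisely the observation the paper has in mind: the thesis states this lemma without proof (``it is easy to observe''), and the justification is exactly that every constraint in $\Omega^l$ mentions only variables in $V_{\text{local}}$ and every constraint in $\Omega^r$ only variables in $V_{\text{remote}}$, so satisfaction factors through the two restrictions and a model can be reassembled from disjoint local and remote parts. No gaps; your writeup simply makes explicit what the paper leaves implicit.
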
 
Let us define the symbolic satisfaction relation $\rho, i \models_{symb} \phi$, where $\phi$ is a formula of local \textbf{X}-length at most $l$ and remote \textbf{X}-length at most $r$ and $\rho$ is an $(l,r)$-symbolic model. The relation $\models_{symb}$ is defined in the same way as $\models$ for CLRV$^{\textnormal{XF}}$($\mathcal{D}$), except that, for atomic formulae $\phi$ we have :
\begin {equation*}
 \rho, i \models_{symb} \phi \Leftrightarrow \phi \in \rho(i)
\end{equation*}
Now we have the following lemma:
\begin{lemma}
A CLRV$^{\textnormal{XF}}$($\mathcal{D}$) formula $\phi$ of local \textbf{X}-length $l$ and remote \textbf{X}-length $r$ over $V = V_{\text{local}} \biguplus V_{\text{remote}}$ is satisfiable iff there exists an infinite symbolic model $\rho$ such that $\rho \models_{symb} \phi$ and $\rho$ admits a model.
\end{lemma}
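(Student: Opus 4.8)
The plan is to derive both directions from a single \emph{truth lemma}: if $\rho$ is an $(l,r)$-symbolic model that admits a model $\sigma$ (equivalently, $\rho$ realizes $\sigma$), then for every subformula $\psi$ of $\phi$ and every $i \in \mathbb{N}$ we have $\sigma, i \models \psi$ if and only if $\rho, i \models_{symb} \psi$. I would prove this by structural induction on $\psi$. For $\wedge$, $\neg$, and the temporal operators $\textnormal{\textbf{X}}, \textnormal{\textbf{U}}, \textnormal{\textbf{X}}^{-1}, \textnormal{\textbf{S}}$ there is nothing to do, since $\models_{symb}$ is defined by exactly the same clauses as $\models$ (the $\textnormal{\textbf{X}}$-clause, for instance, reduces the statement at $i$ to the statement at $i+1$, to which the induction hypothesis applies). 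The only substantive case is the atomic one: after unfolding the abbreviation $\textnormal{\textbf{X}}^{i}x \approx \textnormal{\textbf{X}}^{j}y = \textnormal{\textbf{X}}^{i}(x \approx \textnormal{\textbf{X}}^{j-i}y)$, every atomic subformula of $\phi$ has one of the shapes $x_{\text{loc}} \approx \textnormal{\textbf{X}}^{k}y_{\text{loc}}$, $x_{\text{loc}} < \textnormal{\textbf{X}}^{k}y_{\text{loc}}$, $x_{\text{rem}} \approx \textnormal{\textbf{X}}^{k}y_{\text{rem}}$, or $x_{\text{rem}} \approx \textnormal{\textbf{XF}}\,y_{\text{rem}}$, and by the definition of the local (resp.\ remote) $\textnormal{\textbf{X}}$-length such a formula belongs to $\Omega^{l}$ (resp.\ $\Omega^{r}$). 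Hence $\rho, i \models_{symb} \psi$ iff $\psi \in \rho(i)$ (definition of $\models_{symb}$ on atoms) iff $\sigma, i \models \psi$ (definition of ``$\rho$ realizes $\sigma$'').

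For the ``only if'' direction, assume $\sigma \models \phi$ and set $\rho(i) = \{\,c \in \Omega^{l}\cup\Omega^{r} \mid \sigma, i \models c\,\}$. I would first check that $\rho$ is an $(l,r)$-symbolic model. Since $V_{\text{local}}$ and $V_{\text{remote}}$ are disjoint we have $\Omega^{l}\cap\Omega^{r}=\emptyset$, so $\rho(i)$ splits as $(\rho(i)\cap\Omega^{l})\cup(\rho(i)\cap\Omega^{r})$; the first part equals $l$-$frame_{\mathcal{D}}(\sigma_l[i,\infty))$, where $\sigma_l$ is the local projection of $\sigma$ (local constraints depend only on local variables), hence is an $l$-frame by definition, and the second part is verified to satisfy the four closure conditions defining an $r$-frame (reflexivity, symmetry and transitivity of $\approx$, and the propagation conditions for $\textnormal{\textbf{XF}}$), all of which hold automatically for the $\Omega^{r}$-constraints true of a genuine valuation sequence at position $i$. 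One-step consistency of $(\rho(i),\rho(i+1))$ is immediate: for $0<a,b$, $\textnormal{\textbf{X}}^{a}x \approx \textnormal{\textbf{X}}^{b}y \in \rho(i)$ iff $\sigma(i+a)(x)=\sigma(i+b)(y)$ iff $\textnormal{\textbf{X}}^{a-1}x \approx \textnormal{\textbf{X}}^{b-1}y \in \rho(i+1)$, and similarly for $<$ and for $\textnormal{\textbf{X}}^{a}(x\approx\textnormal{\textbf{XF}}\,y)$. By construction $\rho$ realizes $\sigma$, so $\rho$ admits a model, and the truth lemma at $i=0$ gives $\rho \models_{symb} \phi$.

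For the ``if'' direction, let $\rho$ be an $(l,r)$-symbolic model with $\rho \models_{symb} \phi$ that admits a model $\sigma$; then $\rho$ realizes $\sigma$, and the truth lemma at $i=0$ yields $\sigma \models \phi$, so $\phi$ is satisfiable.

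The routine part of the argument is large but shallow; I expect the only slightly delicate points to be the verification that $\rho(i)\cap\Omega^{r}$ meets the fourth $r$-frame condition (the one describing how an equality $\textnormal{\textbf{X}}^{m}x \approx \textnormal{\textbf{X}}^{n}y$ with $m<n$ forces $\textnormal{\textbf{XF}}$-constraints on the terms in between and beyond), and ensuring in the atomic case of the truth lemma that every atomic subformula, once its leading $\textnormal{\textbf{X}}$'s are absorbed into the inner term, genuinely lies in $\Omega^{l}\cup\Omega^{r}$ — which is precisely where the definitions of $r$-frame and of the $\textnormal{\textbf{X}}$-lengths do their work.
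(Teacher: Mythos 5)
Your argument is correct, but it takes a different (and more self-contained) route than the paper. The paper disposes of this lemma in one line by citing Lemma~3 of the Demri--D'Souza--Gascon paper and Corollary~4.1 of the Demri--D'Souza paper, i.e.\ it treats the ``symbolic model abstraction is sound and complete'' statement as already established separately for the purely remote logic CLTL$^{\text{XF}}$ and the purely local logic CLTL($\mathcal{D}$), and asserts that the combined statement follows because $\Omega^{l}$ and $\Omega^{r}$ are disjoint and a symbolic model is just the union of its two projections. You instead prove the combined statement from scratch: a truth lemma by structural induction (trivial for the Boolean and temporal connectives since $\models_{symb}$ copies the clauses of $\models$, substantive only for atoms, where membership in $\Omega^{l}\cup\Omega^{r}$ is exactly what the definitions of the two $\textnormal{\textbf{X}}$-lengths guarantee), plus, for the forward direction, the verification that the pointwise-true constraints form a genuine $(l,r)$-frame sequence --- in particular that the four closure conditions on $r$-frames, including the $\textnormal{\textbf{XF}}$-propagation condition, hold automatically for constraints realized by an actual valuation sequence. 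Those verifications do go through (e.g.\ for condition~4 with $i<j$, a witness position $m$ for $\textnormal{\textbf{X}}^{i}(x\approx\textnormal{\textbf{XF}}z)$ either falls in the window $(i,j]$, yielding an equality constraint in the frame, or beyond it, yielding $\textnormal{\textbf{X}}^{j}(y\approx\textnormal{\textbf{XF}}z)$), so nothing is missing. What your approach buys is a proof that does not depend on the reader unpacking two external results and checking that they compose; what the paper's approach buys is brevity and the reassurance that the component lemmas are exactly the ones reused later in the automata construction. Since the external results are themselves proved by essentially the induction you carry out, the two arguments have the same mathematical content.
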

\begin{proof}
The lemma clearly follows from Lemma 3 of \cite{DDG12} and Corollary 4.1 of \cite{DD07}.
\end{proof}
\section{Automata-based decidability result}
We shall now give an automata-based decision procedure for satisfiability of CLRV$^{\textnormal{XF}}$($\mathcal{D}$). This decision procedure has been adapted from the automata-based decision procedures of previous papers (refer section 4-5 of \cite{DD07}, section 4-7 of \cite{DDG12}).
Given a CLRV$^{\textnormal{XF}}$($\mathcal{D}$) formula $\phi$, we will construct an automaton $\mathcal{A}_{\phi}$ whose language is nonempty iff $\phi$ is satisfiable. The automaton $\mathcal{A}_{\phi}$ will be a special kind of counter automaton with a generalized B{\"u}chi acceptance condition, whose non-emptiness problem is decidable.\\

Let $\phi$ be a CLRV$^{\textnormal{XF}}$($\mathcal{D}$) formula of local \textbf{X}-length $l$ and remote \textbf{X}-length $r$ over $V = V_{\text{local}} \biguplus V_{\text{remote}}$. We define $\mathcal{A}_{\phi}$ as an automaton over the alphabet $Frame^{(l,r)}$ that accepts the intersection of the languages accepted by the three automata $\mathcal{A}_{1sc}$, $\mathcal{A}_{symb}$ and $\mathcal{A}_{realizable}$ described below:
\begin{itemize}
    \item $\mathcal{A}_{1sc}$ recognizes the set of ``valid" symbolic models (i.e., sequences of frames such that every pair of consecutive frames is one-step consistent),
    \item $\mathcal{A}_{symb}$ recognizes the set of symbolic models satisfying $\phi$,
    \item $\mathcal{A}_{realizable}$ recognizes the set of symbolic models that are realizable.
\end{itemize}
The automaton $\mathcal{A}_{1sc}$ is a B{\"u}chi automaton that checks that the sequence is one-step consistent. $\mathcal{A}_{1sc} = (Q, Q_0, F, \rightarrow)$ such that 
\begin{itemize}
    \item $Q$ is the set of all $(l,r)$-frames and $Q_0 = Q$,
    \item the transition relation is defined by $FR_1 \xrightarrow{FR} FR_2$ iff $FR = FR_1$ and the pair $(FR_1, FR_2)$ is one-step consistent,
    \item the set of final states F is equal to Q.
\end{itemize}
Note that $\phi$ can be thought of as an LTL formula over $\Omega^l \cup \Omega^r$. We define $\mathcal{A}_{symb}$ to be the Vardi-Wolper automaton (refer \cite{VW86}) for the LTL formula $\phi$. It is clear that $\mathcal{A}_{symb}$ accepts exactly the symbolic models satisfying $\phi$.\\

Now one can notice that the local projections of the symbolic models recognized by the automaton $\mathcal{A}_{1sc}$ are also one-step consistent. Now, a constraint system $\mathcal{D}$ is said to satisfy the completion property if, essentially, given a consistent set of constraints $X$ over a set of variables $U$, any partial valuation which respects the constraints in $X$ involving only the assigned variables, can be extended to a valuation which respects all the constraints in $X$ . We have the following lemmas:
\begin{lemma}
Let $\mathcal{D}$ be a constraint system which satisfies the completion property. Then every one-step
consistent l-frame sequence w.r.t. $\mathcal{D}$ admits a D-valuation sequence. (\cite[Lemma 4.2]{DD07}) 
\end{lemma}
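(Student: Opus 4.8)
The plan is to reconstruct the incremental‑gluing argument behind \cite[Lemma 4.2]{DD07}. The idea is that an $l$-frame records exactly the $\mathcal{D}$-constraints holding on a ``window'' of $l+1$ consecutive positions of a valuation sequence, that one-step consistency forces two consecutive windows to agree on their overlap, and that the completion property lets us extend the valuation of an already‑built overlap to a full window. Fixing the valuations of the sequence one position at a time then produces, in the limit, a sequence that realizes the given frame sequence.

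Concretely, write $\rho = (FR_i)_{i \in \mathbb{N}}$ for the given one-step consistent $l$-frame sequence and let $W = \{\textnormal{\textbf{X}}^a x \mid 0 \le a \le l,\ x \in V_{\text{local}}\}$ be the set of time-indexed variables appearing in $\Omega^l$. Each $FR_i$ is then a set of $\mathcal{D}$-constraints over $W$, and, being an $l$-frame, is consistent: indeed $FR_i = l\text{-}frame_{\mathcal{D}}(\tau_i)$ for some $D_{\text{loc}}$-valuation sequence $\tau_i$, so the finite prefix $\tau_i(0)\cdots\tau_i(l)$, read as a $W$-valuation, already realizes $FR_i$. We must produce $D_{\text{loc}}$-valuations $s_0, s_1, \dots$ such that, writing $w_i$ for the window valuation $w_i(\textnormal{\textbf{X}}^a x) = s_{i+a}(x)$, we have $frame_{\mathcal{D}}(w_i, W) = FR_i$ for every $i$; since a constraint in $\Omega^l$ only inspects positions $i, \dots, i+l$ of a suffix, this is precisely the assertion that $\sigma_l := s_0 s_1 \cdots$ admits $\rho$.

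The construction proceeds by induction, maintaining after stage $k$ the valuations $s_0, \dots, s_{k+l}$ with $frame_{\mathcal{D}}(w_i, W) = FR_i$ for all $i \le k$. For the base case, realizability of $FR_0$ gives a $W$-valuation $w$ with $frame_{\mathcal{D}}(w, W) = FR_0$; put $s_a(x) := w(\textnormal{\textbf{X}}^a x)$ for $0 \le a \le l$. For the inductive step, having $s_0, \dots, s_{k+l}$, consider $FR_{k+1}$ together with the partial $W$-valuation $w'$ defined on $W' = \{\textnormal{\textbf{X}}^a x \mid 0 \le a \le l-1\}$ by $w'(\textnormal{\textbf{X}}^a x) = s_{(k+1)+a}(x)$ (all these values are already defined). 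The crucial observation is that $w'$ satisfies every constraint of $FR_{k+1}$ all of whose variables lie in $W'$: such a constraint $c$, with both $\textnormal{\textbf{X}}$-indices at most $l-1$, lies in $FR_{k+1}$ iff — by one-step consistency of the pair $(FR_k, FR_{k+1})$ — the constraint $c^{\uparrow}$ obtained from $c$ by incrementing both indices lies in $FR_k$; and $c^{\uparrow}$, having indices at most $l$, is witnessed by $w_k$, i.e. by exactly the values $s_{k+1+a}(\cdot), s_{k+1+b}(\cdot)$ that $w'$ assigns. Hence the completion property applies to the consistent constraint set $FR_{k+1}$ over $W$ and the partial valuation $w'$, yielding an extension $w_{k+1} \supseteq w'$ defined on all of $W$ that satisfies every constraint in $FR_{k+1}$. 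Set $s_{k+l+1}(x) := w_{k+1}(\textnormal{\textbf{X}}^l x)$; one checks that the window valuation at index $k+1$ is then exactly $w_{k+1}$, while those at indices $\le k$ are untouched, so the invariant is preserved.

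Finally I would argue that $frame_{\mathcal{D}}(w_{k+1}, W) = FR_{k+1}$, not merely $\supseteq$: for the constraint systems at hand, $\mathcal{D} = (D, <, \approx)$ with $<$ a total order, an $l$-frame is a maximally consistent set of constraints — it records, for each pair of variables in $W$, exactly one of $<$, $>$, $\approx$ — so a valuation satisfying all of $FR_{k+1}$ cannot satisfy a constraint outside it. Each $s_i$ is fixed once and never altered, so $\sigma_l := s_0 s_1 \cdots$ is a well-defined $D_{\text{loc}}$-valuation sequence with $l\text{-}frame_{\mathcal{D}}(\sigma_l[i, \infty)) = FR_i$ for all $i$, as required. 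I expect the only real difficulty to be bookkeeping: getting the index shift between ``window'' and ``suffix'' coordinates right, and checking that the partial valuation handed to the completion property has precisely the domain and compatibility its hypothesis demands. All the genuine content is packed into the one-step consistency condition, which was defined exactly so that consecutive windows overlap coherently.
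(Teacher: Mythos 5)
Your proposal is correct and is essentially the standard argument behind the cited result (\cite[Lemma 4.2]{DD07}), which the thesis itself only quotes without proof: realize the first frame on a window of $l+1$ positions, then repeatedly use one-step consistency to check that the shifted overlap respects the next frame and invoke completion to fill in one new position, with maximality of frames (trichotomy for $(D,<,\approx)$) giving equality rather than mere containment. The index bookkeeping and the justification that the partial valuation meets the hypothesis of the completion property are handled correctly, so there is nothing to add.
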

\begin{lemma}
Let $\mathcal{D}$ be a constraint system of the form $\mathcal{D} = (D, <, \approx)$ where D is infinite and $<$ is a total order. Then, $\mathcal{D}$ satisfies completion iff $\mathcal{D}$ is dense and open. (\cite[Lemma 5.3]{DD07})
\end{lemma}
Now using Lemmas 2.2.1 and 2.2.2, we get the following lemma:
\begin{lemma}
The local projection $\rho_l$ of every symbolic model $\rho$, recognized by $\mathcal{A}_{1sc}$, admits a local model. 
\end{lemma}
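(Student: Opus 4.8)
The plan is to observe that the local projection $\rho_l$ of a valid symbolic model is nothing but a one-step consistent $l$-frame sequence, and then to feed it directly into the completion machinery of \cite{DD07}.

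First I would unwind the definitions to see that $\rho_l$ is an $l$-symbolic model. Since $\rho$ is recognized by $\mathcal{A}_{1sc}$, it is an infinite sequence of $(l,r)$-frames in which every consecutive pair $(\rho(i),\rho(i+1))$ is one-step consistent. By definition an $(l,r)$-frame has the shape $FR^l \cup FR^r$ with $FR^l$ an $l$-frame, so each $\rho_l(i) = \rho(i) \cap \Omega^l$ is a genuine $l$-frame; and one-step consistency of an $(l,r)$-frame pair was defined to include one-step consistency of the pair of local projections. Hence $\rho_l$ is an infinite sequence of $l$-frames with every consecutive pair one-step consistent.

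Next I would discharge the completion hypothesis. By the standing assumption on $\mathcal{D}$, we have $\mathcal{D} = (D,<,\approx)$ with $D$ infinite, $<$ a strict total order, and $\mathcal{D}$ dense and open, so \cite[Lemma 5.3]{DD07} tells us that $\mathcal{D}$ satisfies the completion property. The notion of $l$-frame used here is exactly the notion of $k$-frame of \cite{DD07} instantiated on the variable set $V_{\text{local}}$, and a local model is precisely a $D_{\text{loc}}$-valuation sequence. So \cite[Lemma 4.2]{DD07} applies verbatim: every one-step consistent $l$-frame sequence w.r.t.\ $\mathcal{D}$ admits a $D_{\text{loc}}$-valuation sequence. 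Applying this to $\rho_l$ produces a local model realizing $\rho_l$, which is the claim.

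I do not expect a genuine obstacle; the work is entirely bookkeeping --- verifying that the transition relation of $\mathcal{A}_{1sc}$ really does force one-step consistency of the local projections along every run, that the local projection of an $(l,r)$-frame is always a bona fide $l$-frame rather than merely a subset of $\Omega^l$, and that the $l$-frame / $D_{\text{loc}}$-valuation vocabulary matches the $k$-frame / $D$-valuation vocabulary of \cite{DD07} closely enough for the two lemmas stated just above (namely \cite[Lemmas 4.2 and 5.3]{DD07}) to be quoted as stated. If one preferred a self-contained proof in place of the citation, the core of \cite[Lemma 4.2]{DD07} is an inductive construction: having realized $\rho_l$ on positions $0, \dots, n$, one extends the valuation to position $n+1$ by using density and openness to choose fresh values consistent with all the $<$- and $\approx$-constraints of $\rho_l(n+1)$ relating the new position to the already-fixed ones, the one-step consistency of the pair $(\rho_l(n), \rho_l(n+1))$ being exactly what guarantees that such a choice exists.
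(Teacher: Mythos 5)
Your proposal is correct and follows exactly the paper's route: note that the local projection of a run of $\mathcal{A}_{1sc}$ is a one-step consistent $l$-frame sequence, use \cite[Lemma 5.3]{DD07} to get the completion property from density and openness, and then apply \cite[Lemma 4.2]{DD07} to obtain a realizing $D_{\text{loc}}$-valuation sequence. The paper leaves these bookkeeping steps implicit, so your write-up is if anything more explicit than the original.
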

Now from \cite{DDG12}, we have the following lemma:
\begin{lemma}
There exists a simple counter automaton $\mathcal{A}_{real}$ (over the alphabet $Frame^r$) which recognizes the set of all realizable $r$-symbolic models. (\cite[section 7]{DDG12})
\end{lemma}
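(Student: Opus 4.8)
The plan is to follow the construction of section~7 of \cite{DDG12} and to check that it transfers to the present setting unchanged. First I would recall that realizability of an $r$-symbolic model $\rho$ demands a remote model $\sigma_r$ whose induced $r$-frames are exactly those of $\rho$ at every position; since the only predicate appearing in the remote atoms is $\approx$, the remark of Chapter~2 lets us take the data domain to be $\mathbb{N}$, so that neither the ordering $<$ nor density/openness of $\mathcal{D}$ plays any role here. Next I would isolate what $\mathcal{A}_{real}$ must enforce beyond one-step consistency, which is already checked separately by $\mathcal{A}_{1sc}$: namely $(i)$ every obligation $\textbf{X}^i(x \approx \textbf{XF}\,y)$ asserted at some position is eventually \emph{fulfilled}, i.e.\ there is a later position where $y$ actually takes the value that $x$ carries at that earlier position; and $(ii)$ no \emph{spurious} equality arises --- if a frame omits an atom, the model must genuinely witness its falsity. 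Condition $(ii)$ can always be met by drawing fresh elements of $\mathbb{N}$ whenever no constraint forces a repetition, so the substantive requirement is $(i)$.

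For the construction, the finite control of $\mathcal{A}_{real}$ stores the current $r$-frame (equivalently, a bounded window of the abstract model), and the automaton is equipped with one counter per \emph{obligation type}: an obligation type records in a finite way which remote variable is to repeat a value, together with the relevant offset inside the window. Two pending obligations of the same type on the same underlying value are interchangeable, and a fresh value can always be introduced, so what matters is only the \emph{number} of currently pending obligations of each type. On reading the next frame, $\mathcal{A}_{real}$ increments counters by the obligations freshly entering the window (the atoms of the form $\textbf{X}^r(x \approx \textbf{XF}\,y)$, in view of one-step consistency) and decrements a counter whenever the new frame provides a witness discharging a pending obligation of the matching type. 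An obligation left pending forever corresponds to a counter that is strictly positive from some point on; the acceptance condition of a simple counter automaton in the sense of \cite{DDG12} forbids exactly this, and the emptiness problem for such automata is decidable.

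Finally I would check correctness in both directions. If $\sigma_r$ realizes $\rho$, then recording at each position the multiset of not-yet-fulfilled obligations yields an accepting run, every obligation being discharged because $\sigma_r$ witnesses the corresponding $\textbf{XF}$-atom. Conversely, from an accepting run one builds $\sigma_r$ incrementally: assign a fresh natural to each value not constrained to coincide with an earlier one, reuse a value precisely when the run schedules the discharge of an obligation, and propagate the equalities prescribed within each window. One then checks that the induced sequence of $r$-frames is exactly $\rho$ --- the within-window equalities hold by one-step consistency, every obligation is met by the acceptance condition, and no unintended equality appears because of the freshness policy. The main obstacle is this converse direction: the value assignment must realize the prescribed equalities, the scheduled discharges, and the absence of \emph{all} accidental equalities \emph{simultaneously}, and one must fix the right finite notion of obligation type so that finitely many counters suffice. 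This is precisely the delicate bookkeeping of section~7 of \cite{DDG12}, and since the $r$-frames here coincide with the frames of $\text{CLTL}^{\text{XF}}$ used there, that construction supplies $\mathcal{A}_{real}$ verbatim.
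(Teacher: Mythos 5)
Your proposal is correct and follows essentially the same route as the paper, which establishes this lemma purely by importing the construction of section~7 of \cite{DDG12} after observing that the $r$-frames defined here coincide with the frames used there (only $\approx$ appears in remote atoms, so the domain may be taken to be $\mathbb{N}$ and the order plays no role). Your additional sketch of the counter-automaton internals --- counters indexed by finitely many obligation types, increments for freshly entering $\textnormal{\textbf{XF}}$-obligations, decrements on discharge, and an acceptance condition ruling out permanently pending obligations --- is a faithful summary of that cited construction rather than a divergence from it.
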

Now we state the following important lemma:
\begin {lemma}
There exists a simple counter automaton $\mathcal{A}_{realizable}$ which accepts exactly those symbolic models whose remote projections are realizable i.e., admit a remote model. 
\end{lemma}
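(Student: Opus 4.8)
The plan is to reduce the construction of $\mathcal{A}_{realizable}$ to the already-established automaton $\mathcal{A}_{real}$ of Lemma 3.1.7 (from \cite{DDG12}), by composing it with a simple alphabet projection. The key observation is that an $(l,r)$-frame $FR$ decomposes as $FR^l \cup FR^r$, and whether the remote projection of a symbolic model is realizable depends \emph{only} on the sequence of remote projections $FR^r$. So $\mathcal{A}_{realizable}$ should read a symbolic model $\rho$ over the alphabet $Frame^{(l,r)}$, extract at each step the remote projection $\rho_r(i) = \rho(i) \cap \Omega^r$, and simulate $\mathcal{A}_{real}$ on the resulting sequence over $Frame^r$.

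\textbf{Steps.} First I would make precise the alphabet map $\pi : Frame^{(l,r)} \to Frame^r$ given by $\pi(FR) = FR \cap \Omega^r$, and check that it is well defined, i.e.\ that the remote projection of any $(l,r)$-frame is indeed an $r$-frame (this is immediate from the definition of $(l,r)$-frame as a union $FR^l \cup FR^r$ with $FR^r \in Frame^r$). Second, I would recall the definition of a simple counter automaton from \cite{DDG12} and verify that this class is closed under taking inverse images under a letter-to-letter alphabet morphism: given $\mathcal{A}_{real} = (Q, Q_0, F, \delta, \ldots)$ over $Frame^r$ with its counters and zero/increment/decrement instructions, define $\mathcal{A}_{realizable}$ over $Frame^{(l,r)}$ with the same state set, same counters, same acceptance condition, and transition relation $q \xrightarrow{FR} q'$ in $\mathcal{A}_{realizable}$ iff $q \xrightarrow{\pi(FR)} q'$ in $\mathcal{A}_{real}$ (with the identical counter operations). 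Since $\pi$ touches only the input letter and not the counter mechanism, the resulting automaton is still a simple counter automaton. Third, I would argue language correctness: a run of $\mathcal{A}_{realizable}$ on $\rho \in (Frame^{(l,r)})^\omega$ is accepting iff the corresponding run of $\mathcal{A}_{real}$ on $\pi(\rho) = \rho_r \in (Frame^r)^\omega$ is accepting, so $\mathcal{A}_{realizable}$ accepts $\rho$ iff $\rho_r$ is a realizable $r$-symbolic model, which by the definitions in the excerpt is exactly the condition that the remote projection of $\rho$ admits a remote model.

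\textbf{Main obstacle.} The only nontrivial point is a definitional/technical one rather than a combinatorial one: I must confirm that ``simple counter automata'' as used in \cite{DDG12} are genuinely closed under inverse letter-morphisms, i.e.\ that the syntactic restrictions defining ``simple'' (on how counters may be tested and updated, and on the acceptance condition) are preserved when we only relabel transitions by reading a richer alphabet and projecting. This should be straightforward since the construction changes nothing about the counter dynamics, but it is the step that needs to be stated carefully so that the downstream decidability of non-emptiness of $\mathcal{A}_\phi$ (and hence of the satisfiability problem) goes through. A secondary point to address is that $\mathcal{A}_{realizable}$ as built reads $Frame^{(l,r)}$ but ignores the local component, which is fine because the local realizability is handled separately by Lemma 3.1.6 via $\mathcal{A}_{1sc}$; I would remark that the intersection $\mathcal{A}_{1sc} \cap \mathcal{A}_{symb} \cap \mathcal{A}_{realizable}$ then correctly captures exactly the symbolic models that are one-step consistent, satisfy $\phi$ symbolically, and admit a model (invoking Lemma 3.1.2 to recombine local and remote realizability).
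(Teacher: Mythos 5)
Your proposal is correct and follows essentially the same route as the paper: the paper likewise obtains $\mathcal{A}_{realizable}$ by taking the construction of $\mathcal{A}_{real}$ over $Frame^r$ and re-reading it over $Frame^{(l,r)}$, performing each transition's check on the remote projection of the letter rather than on the letter itself, which is exactly your inverse-image-under-$\pi$ construction. Your explicit remark that simple counter automata are closed under such letter-to-letter relabelling (since the counter dynamics are untouched) is a point the paper leaves implicit, but the argument is the same.
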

\begin{proof}
From the previous lemma (Lemma 2.2.4), we know that it is possible to construct an automaton $\mathcal{A}_{real}$ over the alphabet $Frame^r$ which accepts the set of all realizable $r$-symbolic models. The construction of this automaton has been described in detail in Section 7 of \cite{DDG12}. We modify that construction to construct a simple counter automaton $\mathcal{A}_{realizable}$ that accepts exactly those symbolic models whose remote projections are realizable. Basically, we construct the automaton $\mathcal{A}_{realizable}$ over the alphabet $Frame^{(l,r)}$ and at each transition which is described in the construction of $\mathcal{A}_{real}$, we make the corresponding check on the remote projection of the frame that we are reading at that transition (instead of making the check on the frame itself).
\end{proof}
It is known that:
\begin{lemma}
If A is a simple counter automaton, and B is a B{\"u}chi automaton, then we can
construct a simple counter automaton accepting the language $L(A) \cap L(B)$. (\cite[Proposition 1]{DDG12})
\end{lemma}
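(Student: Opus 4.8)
The plan is to use the standard synchronous product construction and then verify that the result stays within the class of simple counter automata. First I would unwind the definitions: a simple counter automaton is (up to the precise formulation in \cite{DDG12}) a finite automaton whose transitions additionally carry counter operations and whose accepting runs are those satisfying a (generalized) B{\"u}chi condition on the finite control together with the prescribed conditions on the counters; a B{\"u}chi automaton is exactly the special case with no counters. The point to keep in mind is that the word ``simple'' constrains only the way the counters are driven along a run, and imposes nothing on the finite-state component beyond what any B{\"u}chi automaton satisfies.

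Given $A = (Q_A, \Sigma, \delta_A, I_A, \ldots)$, a simple counter automaton with its counters and counter-acceptance data left implicit, and $B = (Q_B, \Sigma, \delta_B, I_B, F_B)$ a B{\"u}chi automaton over the same alphabet, I would form the product with state set $Q_A \times Q_B$, initial set $I_A \times I_B$, and a transition $(p, q) \xrightarrow{a} (p', q')$ carrying counter operation $\theta$ exactly when $p \xrightarrow{a,\theta} p'$ is in $\delta_A$ and $q \xrightarrow{a} q'$ is in $\delta_B$: the two components read the same letter in lockstep, the counters are moved solely by the $A$-component, and the $B$-component contributes only its finite control. For acceptance I would take the conjunction of the B{\"u}chi condition(s) inherited from $A$ (now read off the first coordinate), the B{\"u}chi condition ``visit $Q_A \times F_B$ infinitely often'', and the counter-acceptance conditions of $A$ unchanged. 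The first two form a generalized B{\"u}chi condition, which can be replaced by an ordinary B{\"u}chi condition via the usual trick of threading a bounded index through the finite control; this enlarges the finite state space by only a constant factor and does not touch the counters. Correctness is then immediate: a run of the product on $w$ projects to a run of $A$ and a run of $B$ on $w$, and it is accepting precisely when the $A$-projection meets all of $A$'s acceptance requirements and the $B$-projection visits $F_B$ infinitely often, i.e. precisely when $w \in L(A) \cap L(B)$.

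The one place that needs care --- and the main obstacle --- is checking, against the exact definition in \cite{DDG12}, that this product is again a \emph{simple} counter automaton rather than something in a larger class. Concretely one must confirm that (i) multiplying the finite control by a finite automaton and (ii) converting a generalized B{\"u}chi condition to an equivalent B{\"u}chi condition both preserve ``simplicity'', which holds because neither operation introduces new counters or alters the counter operations on transitions. Since a B{\"u}chi automaton is itself a simple counter automaton with an empty counter set, the claim ultimately amounts to closure of simple counter automata under product with the counter-free case, and I would establish it by presenting the product as above and verifying the defining clauses of \cite{DDG12} one at a time. This lemma is exactly what is needed to assemble $\mathcal{A}_{\phi}$ from the B{\"u}chi automata $\mathcal{A}_{1sc}$ and $\mathcal{A}_{symb}$ and the simple counter automaton $\mathcal{A}_{realizable}$.
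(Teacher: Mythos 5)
The paper offers no proof of this lemma; it is imported verbatim as Proposition~1 of \cite{DDG12}. Your synchronous product construction is the standard argument establishing it (and essentially the one in the cited source): the counters and their updates are inherited solely from $A$, the finite control of $B$ is folded into the product states, and the generalized B{\"u}chi condition is collapsed by the usual indexing trick, none of which disturbs the conditions defining a simple counter automaton, so the proposal is sound.
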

Now from Lemma 2.2.3, Lemma 2.2.5, Lemma 2.2.6 and using Lemma 2.1.1, we get:
\begin{lemma}
$\mathcal{A}_\phi = \mathcal{A}_{1sc} \cap \mathcal{A}_{symb} \cap \mathcal{A}_{realizable}$ is a simple counter automaton that accepts only ``valid" symbolic models of the form $\rho$ such that $\rho$ admits a model and $\rho \models_{symb} \phi$.
\end{lemma}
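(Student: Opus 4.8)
The plan is to assemble $\mathcal{A}_\phi$ from its three components and then read off the characterization of its language from lemmas already in hand. First I would record the types of the components: $\mathcal{A}_{1sc}$ is a B\"uchi automaton (in fact a safety automaton, since $F = Q$), $\mathcal{A}_{symb}$ is the Vardi--Wolper (generalized B\"uchi) automaton for $\phi$ read as an LTL formula over $\Omega^l \cup \Omega^r$, and $\mathcal{A}_{realizable}$ is a simple counter automaton by Lemma 2.2.5. The intersection $\mathcal{A}_{1sc} \cap \mathcal{A}_{symb}$ of two B\"uchi automata is again a B\"uchi automaton $\mathcal{B}$ (degeneralizing the acceptance condition if necessary), so one application of Lemma 2.2.6 with $A = \mathcal{A}_{realizable}$ and $B = \mathcal{B}$ shows that $\mathcal{A}_\phi$ is a simple counter automaton with $L(\mathcal{A}_\phi) = L(\mathcal{A}_{1sc}) \cap L(\mathcal{A}_{symb}) \cap L(\mathcal{A}_{realizable})$.

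Next I would unwind the three languages. From the definition of $\mathcal{A}_{1sc}$, a sequence $\rho$ of $(l,r)$-frames lies in $L(\mathcal{A}_{1sc})$ iff every consecutive pair $(\rho(i),\rho(i+1))$ is one-step consistent, i.e.\ $\rho$ is a ``valid'' symbolic model. By correctness of the Vardi--Wolper construction, $\rho \in L(\mathcal{A}_{symb})$ iff $\rho \models_{symb}\phi$. By Lemma 2.2.5, $\rho \in L(\mathcal{A}_{realizable})$ iff the remote projection $\rho_r$ of $\rho$ admits a remote model. Intersecting, every $\rho$ accepted by $\mathcal{A}_\phi$ is a valid symbolic model with $\rho \models_{symb}\phi$ and with realizable remote projection.

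Finally I would upgrade ``valid, with realizable remote projection'' to ``$\rho$ admits a model''. Since $\rho$ is accepted by $\mathcal{A}_{1sc}$, Lemma 2.2.3 gives that its local projection $\rho_l$ admits a local model; combining this with the realizability of $\rho_r$ via Lemma 2.1.1 yields a full model $\sigma$ realizing $\rho$. Hence every word accepted by $\mathcal{A}_\phi$ is a valid symbolic model $\rho$ that admits a model and satisfies $\rho \models_{symb}\phi$, which is the assertion of the lemma.

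I do not anticipate a serious obstacle: the statement is essentially a repackaging of the component lemmas, and the only point that needs care is the acceptance-condition bookkeeping in the product --- the trivial acceptance of $\mathcal{A}_{1sc}$, the generalized B\"uchi condition from the Vardi--Wolper construction, and the counter machinery of $\mathcal{A}_{realizable}$ must all fuse into a genuine simple counter automaton, which is precisely what Lemma 2.2.6 delivers once $\mathcal{A}_{1sc}$ and $\mathcal{A}_{symb}$ have been folded into one B\"uchi automaton. I would also remark that, since all three component languages were characterized by ``iff'' statements above, the converse inclusion holds as well, so $\mathcal{A}_\phi$ accepts \emph{exactly} those valid symbolic models that admit a model and satisfy $\rho \models_{symb}\phi$; this is the form in which the lemma feeds into the decidability argument.
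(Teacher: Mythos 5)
Your proposal is correct and follows essentially the same route as the paper, which derives this lemma directly from Lemmas 2.2.3, 2.2.5, 2.2.6 and 2.1.1 exactly as you do: Lemma 2.2.6 for closure of simple counter automata under intersection with the (product) B\"uchi automaton $\mathcal{A}_{1sc}\cap\mathcal{A}_{symb}$, and then the combination of local realizability (Lemma 2.2.3) with remote realizability (Lemma 2.2.5) via Lemma 2.1.1 to conclude that every accepted symbolic model admits a model. Your closing remark that the characterization is in fact an ``exactly'' rather than an ``only'' is also the form the paper implicitly uses in the subsequent corollary.
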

Using Lemma 2.1.2 and Lemma 2.2.7, we get the following corollary:
\begin{corollary}
A CLRV$^{\textnormal{XF}}$($\mathcal{D}$) formula $\phi$ of local \textbf{X}-length $l$ and remote \textbf{X}-length $r$ over $V = V_{\text{local}} \biguplus V_{\text{remote}}$ is satisfiable iff the language of the corresponding simple counter automaton $\mathcal{A}_\phi$, is non-empty.
\end{corollary}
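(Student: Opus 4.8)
The plan is to obtain this corollary simply by chaining the two ``bridge'' results already in hand: Lemma 2.1.2, which characterizes satisfiability of $\phi$ as the existence of an infinite symbolic model $\rho$ with $\rho \models_{symb} \phi$ that admits a model; and Lemma 2.2.7, which says that $\mathcal{A}_\phi = \mathcal{A}_{1sc} \cap \mathcal{A}_{symb} \cap \mathcal{A}_{realizable}$ accepts the valid symbolic models $\rho$ that admit a model and satisfy $\rho \models_{symb} \phi$. Once it is established that $\mathcal{A}_\phi$ accepts \emph{exactly} this set of symbolic models, the stated equivalence is immediate, and decidability follows because non-emptiness of a simple counter automaton with generalized B\"uchi acceptance is decidable (as recorded earlier via \cite{DDG12}).

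I would argue the two directions separately. For the forward direction, assuming $\phi$ is satisfiable, Lemma 2.1.2 gives a symbolic model $\rho$ with $\rho \models_{symb} \phi$ that admits a model; I would then verify $\rho$ is accepted by each component. It lies in $L(\mathcal{A}_{1sc})$ because by definition a symbolic model is a sequence of $(l,r)$-frames whose consecutive pairs are one-step consistent; it lies in $L(\mathcal{A}_{symb})$ because $\mathcal{A}_{symb}$ is the Vardi--Wolper automaton for $\phi$ viewed as an LTL formula over $\Omega^l \cup \Omega^r$ and $\rho \models_{symb} \phi$; and it lies in $L(\mathcal{A}_{realizable})$ because, by Lemma 2.1.1, a model realizing $\rho$ forces the remote projection $\rho_r$ to admit a remote model, which is precisely what $\mathcal{A}_{realizable}$ tests (Lemma 2.2.5). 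Hence $\rho \in L(\mathcal{A}_\phi)$ and the language is non-empty.

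For the converse, pick any $\rho \in L(\mathcal{A}_\phi)$. By Lemma 2.2.7 it is a valid symbolic model with $\rho \models_{symb} \phi$ whose remote projection is realizable; combining Lemma 2.2.3 (the local projection of a valid symbolic model always admits a local model) with Lemma 2.1.1 then shows $\rho$ admits a full model. Feeding this $\rho$ into Lemma 2.1.2 yields that $\phi$ is satisfiable.

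The only mildly delicate point, and the one I would be most careful to pin down, is the ``tightness'' of $\mathcal{A}_\phi$: Lemma 2.2.7 is phrased as ``accepts \emph{only}'' the desired symbolic models, so I would also want to note explicitly that it accepts \emph{all} of them --- this needs each of the three component automata to be exact rather than merely sound, and it relies on the observation (Lemmas 2.2.3 and 2.1.1) that validity of $\rho$ together with realizability of $\rho_r$ already entails realizability of $\rho$ as a whole. Beyond this bookkeeping there is no genuine combinatorial obstacle; the corollary is essentially a restatement of Lemmas 2.1.2 and 2.2.7 glued together.
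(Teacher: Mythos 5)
Your proposal is correct and follows essentially the same route as the paper, which likewise obtains the corollary by combining Lemma 2.1.2 with Lemma 2.2.7 (the paper gives no further detail). Your observation that Lemma 2.2.7's ``accepts only'' phrasing needs to be strengthened to ``accepts exactly'' --- justified via Lemmas 2.1.1, 2.2.3 and 2.2.5 --- is a worthwhile clarification of a point the paper leaves implicit, but it does not change the argument.
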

We know that the non-emptiness problem for simple counter automata is decidable (\cite[Lemma 6]{DDG12}).
Hence, we have the following theorem:
\begin{theorem}
The satisfiability problem for CLRV$^{\textnormal{XF}}$($\mathcal{D}$) is decidable. 
\end{theorem}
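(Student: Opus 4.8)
The plan is to settle this by an automata-theoretic reduction, in the spirit of \cite{DD07} and \cite{DDG12}: given a CLRV$^{\textnormal{XF}}$($\mathcal{D}$) formula $\phi$ I would build an automaton $\mathcal{A}_\phi$ over the alphabet of $(l,r)$-frames whose language is non-empty exactly when $\phi$ is satisfiable, and then invoke decidability of the non-emptiness problem for the relevant automaton model. The structural fact that makes this feasible is the disjoint-union discipline on variables: $<$ lives only on $V_{\text{local}}$, $\textnormal{\textbf{XF}}$ lives only on $V_{\text{remote}}$, and there is no cross-comparison, so any candidate model factors into a local model and a remote model coupled only through the temporal skeleton. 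This lets us treat the $<$-fragment with the CLTL$(\mathcal{D})$ machinery and the repeating-values fragment with the CLTL$^{\text{XF}}$ machinery simultaneously rather than having to design a single monolithic device.

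Concretely, the steps in order: (i) recall the symbolic abstraction, so that (by Lemma 2.1.2) $\phi$ is satisfiable iff some one-step-consistent $(l,r)$-symbolic model $\rho$ satisfies $\phi$ under $\models_{symb}$ and admits a model; (ii) decompose admissibility via Lemma 2.1.1, namely $\rho$ admits a model iff its local projection admits a local model and its remote projection admits a remote model; (iii) dispose of the local half cheaply — since $\mathcal{D}=(D,<,\approx)$ is dense and open it has the completion property (Lemma 2.2.2), so by Lemma 2.2.1 every one-step-consistent $l$-frame sequence already admits a $D_{\text{loc}}$-valuation sequence, whence no extra gadget is needed and a plain B\"uchi condition suffices (this is exactly Lemma 2.2.3); (iv) handle the remote half by lifting the simple counter automaton of \cite{DDG12} for realizable $r$-symbolic models to read $(l,r)$-frames while performing every one of its checks on the remote projection only (Lemma 2.2.5); (v) form $\mathcal{A}_\phi = \mathcal{A}_{1sc} \cap \mathcal{A}_{symb} \cap \mathcal{A}_{realizable}$, where $\mathcal{A}_{symb}$ is the Vardi--Wolper automaton for $\phi$ viewed as an LTL formula over the atomic constraints; by closure of simple counter automata under intersection with B\"uchi automata (Lemma 2.2.6) this is again a simple counter automaton, and by Corollary 2.2.1 its language is non-empty iff $\phi$ is satisfiable; (vi) conclude from the decidability of non-emptiness for simple counter automata (\cite[Lemma 6]{DDG12}).

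The main obstacle is not any isolated hard lemma but the modular glue: one must check that the projection decomposition of realizability is sound, that adapting the \cite{DDG12} counter automaton to ignore the local coordinates preserves its acceptance behaviour, and that the symbolic semantics agrees with the concrete one on realizable symbolic models so that working over the frame alphabet is legitimate. Once those bookkeeping points are verified, decidability follows immediately from the decidability of non-emptiness for simple counter automata.
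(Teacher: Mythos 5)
Your proposal is correct and follows essentially the same route as the paper: the same symbolic-model abstraction (Lemmas 2.1.1 and 2.1.2), the same use of density/openness via the completion property to discharge the local projection (Lemmas 2.2.1--2.2.3), the same adaptation of the counter automaton of \cite{DDG12} for the remote projection (Lemma 2.2.5), and the same intersection $\mathcal{A}_{1sc} \cap \mathcal{A}_{symb} \cap \mathcal{A}_{realizable}$ followed by the decidable non-emptiness test for simple counter automata. No substantive differences from the paper's argument.
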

Note that $\mathcal{D}$ was any dense and open domain. In particular, $(\mathbb{R}, <, \approx)$ is a dense and open domain. Hence, the satisfiability problem for CLRV$^{\textnormal{XF}}$$((\mathbb{R}, <, \approx))$ is decidable.

    \chapter{Undecidability results}

\section{Logic of Remote Constraints}
\subsection{Introduction}
Formulae of the logic $\text{CLTL}^{\text{XF}, \text{XF}^{-1}}$ (described in chapter 1) introduced in \cite{DDG12} are defined by the grammar
below:
\begin{equation*}
    \phi ::= x \approx \textnormal{\textbf{X}}^{i} y \>|\> x \approx \textnormal{\textbf{XF}} y \>|\> x \approx \textnormal{\textbf{XF}}^{-1} y \>|\> \phi \wedge \phi \>|\> \neg \phi \>|\> \textnormal{\textbf{X}} \phi \>|\> \phi\textnormal{\textbf{U}}\phi \>|\> \textnormal{\textbf{X}}^{-1} \phi \>|\> \phi\textnormal{\textbf{S}}\phi
\end{equation*}

Let us consider a fragment of LTL with data comparisons which extends $\text{CLTL}^{\text{XF}, \text{XF}^{-1}}$ by addition of the constraints: $x < \textnormal{\textbf{XF}} y$ and $x > \textnormal{\textbf{XF}} y$. We consider this extension without the past temporal operators ($\textnormal{\textbf{X}}^{-1}$ and \textbf{S}).\\

The formal syntax of this extension is given by the following grammar :  
\begin{align*}
    \phi ::= \hspace{0.06 cm} & x \approx \textnormal{\textbf{X}}^{i} y \>|\> x \approx \textnormal{\textbf{XF}} y \>|\> x \approx \textnormal{\textbf{XF}}^{-1} y \>|\> x < \textnormal{\textbf{XF}} y \>|\> x > \textnormal{\textbf{XF}} y 
    \> \\
    & |\> \phi \wedge \phi \>|\> \neg \phi \>|\> \textnormal{\textbf{X}} \phi \>|\> \phi\textnormal{\textbf{U}}\phi 
\end{align*}

We call this extension LRC$^{\top}(\mathcal{D})$ where the constraint system $\mathcal{D}$ is of the form $(D, <, \approx)$, where $D$ is an infinite set, `$<$' is any strict total ordering over $D$ and `$\approx$' is the equality relation.\\

A $D$-valuation is defined to be a map from $V$ to $D$. A model is an infinite sequence $\sigma$ of $D$-valuations. For every model $\sigma$ and $i \geq 0$, the satisfaction relation $\models$ is extended as follows:

\begin{itemize}
    
    \item $\sigma, i \models x < \textnormal{\textbf{XF}} y$ iff there exists $j$ such that $i < j$ and $\sigma(i)(x) < \sigma(j)(y)$
    
    \item $\sigma, i \models x > \textnormal{\textbf{XF}} y$ iff there exists $j$ such that $i < j$ and $\sigma(i)(x) > \sigma(j)(y)$

\end{itemize}

The satisfaction relation is defined exactly the same way as in $\text{CLTL}^{\text{XF}, \text{XF}^{-1}}$  for the other formulae.

We write $\sigma \models \phi$ if $\sigma, 0 \models \phi$. We shall use the standard derived temporal operators (\textbf{G}, \textbf{F}), and derived Boolean operators$ (\vee, \Rightarrow)$. We also use the
notation $\textnormal{\textbf{X}}^{i}x \approx \textnormal{\textbf{X}}^{j} y$ as an abbreviation for the formula $\textnormal{\textbf{X}}^{i}(x \approx \textnormal{\textbf{X}}^{j-i} y)$ (assuming without any
loss of generality that $i \leq j$).\\

One could extend LRC$^{\top}(\mathcal{D})$ by replacing $x \approx \textnormal{\textbf{XF}} y$ with $x \approx \langle \phi? \rangle y$ and $x \approx \textnormal{\textbf{XF}}^{-1} y$ with $x \approx {\langle \phi? \rangle}^{-1} y$ in the syntax. We call this extension LRC($\mathcal{D}$). Formally, the syntax of this logic is: 
\begin{align*}
    \phi ::= \hspace{0.06 cm} & x \approx \textnormal{\textbf{X}}^{i} y \>|\> x \approx \langle \phi? \rangle y \>|\> x \approx {\langle \phi? \rangle}^{-1} y \>|\> x < \textnormal{\textbf{XF}} y \>|\> x > \textnormal{\textbf{XF}} y 
    \> \\
    & |\> \phi \wedge \phi \>|\> \neg \phi \>|\> \textnormal{\textbf{X}} \phi \>|\> \phi\textnormal{\textbf{U}}\phi \>|\> \textnormal{\textbf{X}}^{-1} \phi \>|\> \phi\textnormal{\textbf{S}}\phi
\end{align*}
The satisfaction relation for the new atomic formulae is defined as follows:
\begin{itemize}
    \item $\sigma, i \models x \approx \langle \phi? \rangle y$ iff there exists $j$ such that $i < j$, $\sigma(i)(x) = \sigma(j)(y)$ and $\sigma, j \models \phi$
    
    \item $\sigma, i \models x \approx {\langle \phi? \rangle}^{-1} y$ iff there exists $j \geq 0$ such that $j < i$, $\sigma(i)(x) = \sigma(j)(y)$ and $\sigma, j \models \phi$
\end{itemize}
The semantics for the other atomic formulae is the same as in LRC$^{\top}(\mathcal{D})$.\\

It is easy to see that LRC($\mathcal{D}$) is a generalization of LRC$^{\top}(\mathcal{D})$. This is because the formula $x \approx \textnormal{\textbf{XF}} y$ can be expressed as $x \approx \langle \top? \rangle y$ and the formula $x \approx \textnormal{\textbf{XF}}^{-1} y$ can be expressed as $x \approx {\langle \top? \rangle}^{-1} y$.\\

But in fact, it turns out that there is a polynomial-time algorithm, which, given a formula $\phi$ in LRC($\mathcal{D}$), computes a formula $\phi^{\prime}$ in LRC$^{\top}(\mathcal{D})$, that preserves satisfiability: there is a model
$\sigma$ such that $\sigma \models \phi$ iff there is a model $\sigma^{\prime}$ such that $\sigma^{\prime} \models \phi^{\prime}$. The proof follows from the Proof of Corollary 4.6 of \cite{SDM16}. 

\begin{problem}
Let $\mathcal{D} = (D, <, \approx)$ be a constraint system. The satisfiability problem for LRC$^{\top}(\mathcal{D})$ is: Given an LRC$^{\top}(\mathcal{D})$ formula $\phi$, does there exist a model which satisfies $\phi$?
\end{problem}

From the previous section, it is clear that the problem of satisfiability for LRC$^{\top}(\mathcal{D})$ is equivalent to the satisfiability problem for LRC($\mathcal{D}$) \\

We shall now prove that the satisfiability problem for LRC($\mathcal{D}$) is undecidable (and hence that the satisfiability problem for LRC$^{\top}(\mathcal{D})$ is undecidable) 

\subsection{Proof of Undecidability}
The proof is by a reduction from infinite version of Post's Correspondence Problem (PCP). Consider a finite alphabet $\Sigma = \{a_1,a_2,...a_r\}$. An instance of infinite PCP
consists of $k$ pairs $(u_i, v_i)$ of words from $\Sigma^{*}$ and the question is whether there exists an infinite sequence of indices $i_0, i_1, ...$ such that $u_{i_0}u_{i_1}... = v_{i_0}v_{i_1}... $. The infinite version of PCP is known to be undecidable (just like the well-known finite version of PCP).\\

We modify the proof of Proposition 29 of \cite{MCATL10}. There, the authors prove the undecidability of $\textnormal{FO}^2(\sim, \prec, +1, <)$ by giving a reduction from PCP to $\textnormal{FO}^2(\sim, \prec, +1, <)$. Here, we modify the proof and give a reduction from infinite PCP to LRC($\mathcal{D}$) (where $\mathcal{D}$ is any given constraint system).\\ 

Let $\overline{\Sigma} = \Sigma \cup \Sigma^{\prime}$  be the alphabet consisting of two disjoint copies of $\Sigma$. If $w$ is a word in $\Sigma^{\omega}$, then $\overline{w} \in \overline{\Sigma}^{\omega}$ is obtained from $w$ by replacing each letter $a$ with the corresponding letter $\overline{a}$ in $\overline{\Sigma}$. Consider a solution $i_0, i_1, ... $ for the given infinite PCP instance. Let $w$ be the infinite word $u_{i_0}u_{i_1}, ... $, or equivalently the word $v_{i_0}v_{i_1}, ... $. Let $\hat{w}$ be the infinite word $u_{i_0}\overline{v_{i_0}}u_{i_1}\overline{v_{i_1}}...$ over $\overline{\Sigma}$.\\

Now, consider a set of variables $V = \{x, x_1, x_2, ... x_r, y_1, y_2, ... y_r, w_1, w_2\}$ (where $r = |\Sigma|$). Given a solution $\hat{w}$ to an instance of the infinite PCP, we encode it as a $D$-valuation sequence $\sigma$ over $V$, satisfying the following conditions:
\begin{enumerate}[(C1)]

    \item For all $i \geq 0$, $1 \leq j \leq r$:  
    
    $\hat{w}(i) = a_j$ iff $\sigma(i)(x) = \sigma(i)(x_j)$ and $\sigma(i)(x) \not = \sigma(i)(v)$ for any $v \in V\setminus\{x_j\}$;\\
    
    $\hat{w}(i) = \overline{a_j}$ iff $\sigma(i)(x) = \sigma(i)(y_j)$ and $\sigma(i)(x) \not = \sigma(i)(v)$ for any $v \in V\setminus\{y_j\}$
    
    \item The subsequence of values of the variable $x$ at $\Sigma$-positions (i.e. positions $i$ for which $\hat{w}(i) \in \Sigma$) is equal to the subsequence of values of the variable $x$ at $\overline{\Sigma}$-positions (i.e. positions $i$ for which $\hat{w}(i) \in \overline{\Sigma}$) 
    
    \item For all $i \geq 0$, $j \geq 0$, $i \not = j$, if $\hat{w}(i) \in \Sigma$ and $\hat{w}(j) \in \Sigma$, then $\sigma(i)(x) \not = \sigma(j)(x)$. Similarly, for all $i \geq 0$, $j \geq 0$, $i \not = j$, if $\hat{w}(i) \in \overline{\Sigma}$ and $\hat{w}(j) \in \overline{\Sigma}$, then $\sigma(i)(x) \not = \sigma(j)(x)$
    
\end{enumerate}
We now describe an LRC($\mathcal{D}$) formula $\psi$ such that $w$ is a solution to the infinite PCP if and only if $\sigma$ is a model of $\psi$. Note that while describing the formula we shall interchangeably use $(x \approx x_i)$ with $a_i$, $(x \approx y_i)$ with $\overline{a_i}$, $(x \approx w_1)$ with $p$ and $(x \approx w_2)$ with $q$. The formula is the conjunction of the following properties:
\begin{enumerate} [(P1)]
    \item At every position, the value of $x$ is equal to that of exactly one variable in $V \setminus \{x\}$
    
    \item Now that we are taking a conjunction, we can define the notion of a word corresponding to any $D$-valuation sequence satisfying property 1. If $\sigma_0$ is a $D$-valuation sequence satisfying property 1, then the word corresponding to $\sigma_0$ is the infinite word $w_0 \in (\overline{\Sigma} \cup \{p,q\})^\omega$ such that at every position $i$, we have $w_0(i)$ equal to the unique letter from $\overline{\Sigma} \cup \{p,q\}$, that holds at $\sigma_0(i)$. (The existence and uniqueness of the letter that holds at each position of $\sigma_0$, follows from property 1).
    Now the second property is: The word corresponding to any model that satisfies the formula must belong to $\{u_ip\overline{v_i}q \mid 1 \leq i \leq k\}^\omega$. This can be easily expressed as a formula in LRC($\mathcal{D}$). 
    
    \item The value of $x$ at every $\Sigma$-position is strictly less than the value of $x_i$ at every position ahead of it, for all $1 \leq i \leq r$.
    The value of $x$ at every $\overline{\Sigma}$-position is strictly less than the value of $y_i$ at every position ahead of it, for all $1 \leq i \leq r$. This condition can be expressed as a formula in LRC($\mathcal{D}$). This ensures that the sequence of values of $x$ at $\Sigma$-positions is in ascending order and that the sequence of values of $x$ at $\overline{\Sigma}$-positions is also in ascending order.
    
    \item For all $1 \leq i \leq r$, the value of $x$ at every $a_i$ position is equal to the value of $x$ at some $\overline{a_i}$ position (in the past or future). Similarly, for all $1 \leq i \leq r$, the value of $x$ at every $a_i$ position is equal to the value of $x$ at some $\overline{a_i}$ position (in the past or future). 
    
\end{enumerate}

(P3) and (P4), together ensure that for any model satisfying the formula, the value of $x$ at the $i$th $\Sigma$ position is equal to the value of $x$ at the $i$th $\overline{\Sigma}$ position, for all $i \geq 0$ (Note that the values of $x$ at all $\Sigma$ positions are different and that the values of $x$ at all $\overline{\Sigma}$ positions are also different). (P4) says that the value of $x$ at an $a_j$ position is equal to the value of $x$ at an $\overline{a_j}$ position (for all $1 \leq j \leq r$). This means that if the letter at the $i$th $\Sigma$ position is $a_j$, then the letter at $i$th $\overline{\Sigma}$ position is $\overline{a_j}$ (for all $1 \leq j \leq r$, $i \geq 0$). \\

Thus it is clear that if $\sigma$ is a model satisfying the properties above, then the word corresponding to $\sigma$ is of the form $u_{i_0}p\overline{v_{i_0}}qu_{i_1}p\overline{v_{i_1}}q...$ where $u_{i_0}u_{i_1}... = v_{i_0}v_{i_1}...$. Hence we have a solution to the infinite PCP. Similarly, now if $w$ is a solution to the infinite PCP, consider the word $\hat{w}$ as defined earlier. Now consider a $D$-valuation sequence $\sigma$ that we defined using $\hat{w}$, satisfying conditions (C1), (C2) and (C3). In addition, we also wish to ensure that $\sigma$ satisfies property (P3), which can be done by choosing $\sigma$ accordingly. Now any such $\sigma$ defined using $\hat{w}$ that satisfies (C1), (C2), (C3) and (P3), clearly satisfies all the four properties ((P1) - (P4)) and hence is a model of the formula.\\

All that remains is to describe the formula $\psi$ which is a conjunction of the four properties described above.

\begin{equation*}
    \psi = \psi_1 \wedge \psi_2 \wedge \psi_3 \wedge \psi_4
\end{equation*}

where:
\begin{align*}
    \psi_1 = \textnormal{\textbf{G}}\begin{pmatrix}&\bigvee\limits_{i=1}^{r} \begin{pmatrix}&\left(\left(\left(x \approx x_i \right) \wedge \bigwedge\limits_{i \not = j} \neg\left(x \approx x_j \right) \wedge \bigwedge\limits_{j=1}^{r} \neg\left(x \approx y_j \right)\right) \right. \\
    & \left. \vee  \left(\left(x \approx y_i \right) \wedge \bigwedge\limits_{i \not = j} \neg\left(x \approx y_j \right) \wedge \bigwedge\limits_{j=1}^{r} \neg\left(x \approx x_j \right)\right)\right)
    \wedge \neg\left(x \approx w_1 \right) \wedge \neg\left(x \approx w_2 \right)\end{pmatrix} \\\\
    & \left. \vee \left(\left(x \approx w_1 \right) \wedge \bigwedge\limits_{j=1}^{r} \neg\left(x \approx x_j \right) \wedge \bigwedge\limits_{j=1}^{r} \neg\left(x \approx y_j \right) \wedge \neg\left(x \approx w_2 \right)\right) \right. \\\\
    & \left. \vee \left(\left(x \approx w_2 \right) \wedge \bigwedge\limits_{j=1}^{r} \neg\left(x \approx x_j \right) \wedge \bigwedge\limits_{j=1}^{r} \neg\left(x \approx y_j \right) \wedge \neg\left(x \approx w_1 \right)\right) \right. \end{pmatrix}  \\
\end{align*}

\begin{flalign*}
    \psi_2 = \alpha \wedge \textnormal{\textbf{G}}(q \Rightarrow \textnormal{\textbf{X}} \alpha) &&
\end{flalign*}
\newline
with 
\begin{flalign*}
    \alpha = \bigvee\limits_{i=1}^{k}\begin{pmatrix}& u_{i_1} \wedge \textnormal{\textbf{X}} u_{i_2} \wedge ... \wedge \textnormal{\textbf{X}}^{(l_i - 1)} u_{i_{l_i}} \\\\
    & \wedge \textnormal{\textbf{X}}^{l_i}p \wedge \textnormal{\textbf{X}}^{(l_i + 1)} \overline{v_{i_1}} \\\\
    & \wedge 
    \textnormal{\textbf{X}}^{(l_i + 2)} \overline{v_{i_2}} \wedge ... \wedge \textnormal{\textbf{X}}^{(l_i + m_i)} \overline{v_{i_{m_i}}} \\\\
    & \wedge 
    \textnormal{\textbf{X}}^{(l_i + m_i + 1)}q 
    \end{pmatrix} &&
\end{flalign*}
\newline
(Here, $l_i$ denotes the length of the word $u_i$ and $u_{i_j}$ denotes the $j$th letter of the word $u_i$ for all $1 \leq j \leq l_i$. Similarly, $m_i$ denotes the length of the word $\overline{v_i}$ and $\overline{v_{i_j}}$ denotes the $j$th letter of the word $\overline{v_i}$ for all $1 \leq j \leq m_i$.)

\begin{flalign*}
\psi_3 = \textnormal{\textbf{G}}\begin{pmatrix} &\bigwedge\limits_{i,j} \neg((x_i \approx \textnormal{\textbf{XF}} x_j) \wedge (x \approx x_i)) \wedge \bigwedge\limits_{i,j} \neg((x_i > \textnormal{\textbf{XF}} x_j) \wedge (x \approx x_i)) \\\\
& \wedge \bigwedge\limits_{i,j} \neg((y_i \approx \textnormal{\textbf{XF}} y_j) \wedge (x \approx y_i)) \wedge \bigwedge\limits_{i,j} \neg((y_i > \textnormal{\textbf{XF}} y_j) \wedge (x \approx y_i))
\end{pmatrix} &&
\end{flalign*}

\begin{flalign*}
\psi_4 = \textnormal{\textbf{G}}\begin{pmatrix}&\bigwedge\limits_{i=1}^{r}((x \approx x_i) \Rightarrow ((x \approx \langle (x \approx y_i)? \rangle x) \vee (x \approx {\langle (x \approx y_i)? \rangle}^{-1} x))) \\\\
& \wedge \bigwedge\limits_{i=1}^{r}((x \approx y_i) \Rightarrow ((x \approx \langle (x \approx x_i)? \rangle x) \vee (x \approx {\langle (x \approx x_i)? \rangle}^{-1} x)))
\end{pmatrix} && 
\end{flalign*}
\newline
One can easily verify that the formulas $\psi_1$-$\psi_4$ express the properties (P1)-(P4). \\

Hence, given an instance of infinite PCP, we have constructed an LRC$(\mathcal{D})$ formula $\psi$ such that the infinite PCP has a solution iff the formula $\psi$ is satisfiable. \\
\begin{theorem}
The satisfiability problems for LRC$(\mathcal{D})$ and LRC$^{\top}(\mathcal{D})$ are undecidable.
\end{theorem}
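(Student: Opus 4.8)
\emph{Approach.} The reduction from the infinite Post Correspondence Problem has already been assembled above: from an instance $I$ of infinite PCP we produced the set of variables $V$, the encoding conditions (C1)--(C3), and the formula $\psi=\psi_1\wedge\psi_2\wedge\psi_3\wedge\psi_4$ of LRC$(\mathcal{D})$. What remains is to verify that this reduction is correct --- that $I$ admits a solution iff $\psi$ is satisfiable over $\mathcal{D}$ --- and then to transfer the resulting undecidability to LRC$^{\top}(\mathcal{D})$. I would organise the verification in three steps, corresponding to the two directions of the reduction and the final transfer.

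\emph{Step 1 (from a solution to a model).} Given a solution $i_0,i_1,\ldots$ of $I$, form $\hat w=u_{i_0}\overline{v_{i_0}}u_{i_1}\overline{v_{i_1}}\cdots$ and build a $D$-valuation sequence $\sigma$ over $V$ realising (C1)--(C3); here I would use that $D$ is infinite, choosing the common subsequence of $x$-values at the $\Sigma$- and $\overline{\Sigma}$-positions to be a strictly monotone chain in $D$ (such a chain exists in any infinite linear order; if it is descending rather than ascending one uses the mirror of $\psi_3$, which is equally expressible since both $x<\textnormal{\textbf{XF}}y$ and $x>\textnormal{\textbf{XF}}y$ are in the syntax), which also secures (P3). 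One then checks, formula by formula, that $\sigma\models\psi_1$ (from (C1) and the use of $p,q$ at separator positions), $\sigma\models\psi_2$ (the word corresponding to $\sigma$ is exactly $u_{i_0}p\overline{v_{i_0}}q\,u_{i_1}p\overline{v_{i_1}}q\cdots\in\{u_ip\overline{v_i}q\}^\omega$), $\sigma\models\psi_3$ (monotonicity puts the $x$-value at any $\Sigma$-position strictly below all later $x_j$-values, and dually for $\overline{\Sigma}$), and $\sigma\models\psi_4$ (by (C1) and (C2), the $x$-value at the $m$-th $a_j$-position equals the $x$-value at the $m$-th $\overline{a_j}$-position). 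This unwinding of the semantics is routine.

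\emph{Step 2 (from a model to a solution).} Conversely, suppose $\sigma\models\psi$. By $\psi_1$, exactly one letter of $\overline{\Sigma}\cup\{p,q\}$ holds at each position, so $\sigma$ determines a word $w_\sigma\in(\overline{\Sigma}\cup\{p,q\})^\omega$; by $\psi_2$, $w_\sigma=u_{i_0}p\overline{v_{i_0}}q\,u_{i_1}p\overline{v_{i_1}}q\cdots$ for some index sequence $i_0,i_1,\ldots$. By $\psi_3$ the subsequence of $x$-values at $\Sigma$-positions is strictly increasing, hence all distinct, and likewise at $\overline{\Sigma}$-positions. By $\psi_4$ every $x$-value occurring at an $a_j$-position also occurs at some $\overline{a_j}$-position, and conversely; since within each of the two subsequences all values are distinct, the two value-sequences have the same underlying set, and being both strictly increasing they agree term by term --- and, because the matching value at the $m$-th place occurs at an $a_j$-position on one side, $\psi_4$ forces it to occur at an $\overline{a_j}$-position (same index $j$) on the other. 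Hence the sequence of letters at $\Sigma$-positions, namely $u_{i_0}u_{i_1}\cdots$, equals, modulo the bar, the sequence at $\overline{\Sigma}$-positions, namely $\overline{v_{i_0}}\,\overline{v_{i_1}}\cdots$; that is, $u_{i_0}u_{i_1}\cdots=v_{i_0}v_{i_1}\cdots$, and this is infinite (if it were not, $\psi_4$ would demand an injection of infinitely many distinct $\Sigma$-position values into finitely many $\overline{\Sigma}$-position values; one may also assume w.l.o.g.\ that no $u_i$ is empty). So $i_0,i_1,\ldots$ solves $I$.

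\emph{Step 3 and the main obstacle.} Steps 1--2 give a computable reduction of infinite PCP to satisfiability of LRC$(\mathcal{D})$, which is therefore undecidable; since the polynomial-time translation from LRC$(\mathcal{D})$ to LRC$^{\top}(\mathcal{D})$ noted above (from the proof of Corollary 4.6 of \cite{SDM16}) preserves satisfiability, undecidability transfers to LRC$^{\top}(\mathcal{D})$ as well. The part I expect to require the most care is Step 2 --- specifically, turning the combined use of $\psi_3$ (strict monotonicity, hence distinctness) and $\psi_4$ (mutual occurrence of values across the two copies of $\Sigma$) into the precise statement that the $m$-th $\Sigma$-position and the $m$-th $\overline{\Sigma}$-position carry matching letters; everything else reduces to a direct check against the definitions of the satisfaction relation.
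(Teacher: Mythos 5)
Your proposal is correct and follows essentially the same route as the paper: the same reduction from infinite PCP via the formula $\psi=\psi_1\wedge\psi_2\wedge\psi_3\wedge\psi_4$, the same key argument that strict monotonicity of the $x$-values at $\Sigma$- and $\overline{\Sigma}$-positions ($\psi_3$) combined with the cross-occurrence condition ($\psi_4$) forces the two letter sequences to agree position by position, and the same transfer to LRC$^{\top}(\mathcal{D})$ via the satisfiability-preserving translation from \cite{SDM16}. Your added care about domains lacking infinite ascending chains (using the mirrored constraint) and about ruling out finitely many $\Sigma$-positions are reasonable refinements of details the paper leaves implicit, not a different method.
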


\section{CLRV$^{\textnormal{XF},\textnormal{XF}^{-1}}$ with interactions}
\subsection{Introduction}
Previously we had defined the logic CLRV$^{\textnormal{XF}}$($\mathcal{D}$) (where the constraint system $\mathcal{D}$ is of the form $(D, <, \approx)$, where $D$ is an infinite set, `$<$' is any strict total ordering over $D$ and `$\approx$' is the equality relation). The syntax of the logic was as follows:
\begin{align*}
     \phi :: = \hspace{0.06 cm} & x_{\text{loc}} \approx \textnormal{\textbf{X}}^{i} y_{\text{loc}} \hspace{0.06cm} |\hspace{0.06cm} x_{\text{loc}} < \textnormal{\textbf{X}}^{i} y_{\text{loc}} \hspace{0.06cm}|\hspace{0.06cm} x_{\textnormal{rem}} \approx \textnormal{\textbf{X}}^{i} y_{\textnormal{rem}\>} \hspace{0.06cm} | \hspace{0.06cm} x_{\text{rem}} \approx \textnormal{\textbf{XF}} y_{\text{rem}} \hspace{0.06cm}  \\ 
     &| \hspace{0.06cm} \phi \wedge \phi \hspace{0.06cm}
     | \hspace{0.06cm} \neg \phi \hspace{0.06cm} | \hspace{0.06cm} \textnormal{\textbf{X}} \phi \hspace{0.06cm} | \hspace{0.06cm} \phi\textnormal{\textbf{U}}\phi \hspace{0.06cm} | \hspace{0.06cm} \textnormal{\textbf{X}}^{-1} \phi 
     |\hspace{0.06cm} \phi\textnormal{\textbf{S}}\phi, \\ & \text{ where } x_{\text{loc}}, y_{\text{loc}} \in V_{\text{local}},\hspace{0.12cm} x_{\text{rem}}, y_{\text{rem}} \in V_{\text{remote}}
\end{align*}
Here the set of variables $V = V_{\text{local}} \biguplus V_{\text{remote}}$ and the local constraints involve
only variables in $V_{\text{local}}$ and remote constraints involve only variables in $V_{\text{remote}}$.\\

Now we consider an extension of CLRV$^{\textnormal{XF}}$($\mathcal{D}$) that includes past remote constraints. We call this extension CLRV$^{\textnormal{XF},\textnormal{XF}^{-1}}$($\mathcal{D}$). Formally, the syntax of this logic is as follows:
\begin{align*}
     \phi :: = \hspace{0.06 cm} & x_{\text{loc}} \approx \textnormal{\textbf{X}}^{i} y_{\text{loc}} \hspace{0.06cm} |\hspace{0.06cm} x_{\text{loc}} < \textnormal{\textbf{X}}^{i} y_{\text{loc}} \hspace{0.06cm}|\hspace{0.06cm} x_{\textnormal{rem}} \approx \textnormal{\textbf{X}}^{i} y_{\textnormal{rem}\>} \hspace{0.06cm} | \hspace{0.06cm} x_{\text{rem}} \approx \textnormal{\textbf{XF}} y_{\text{rem}} \hspace{0.06cm}  \\ 
     &| \hspace{0.06cm} x_{\text{rem}} \approx \textnormal{\textbf{XF}}^{-1} y_{\text{rem}} \hspace{0.06cm} |\hspace{0.06cm} \phi \wedge \phi \hspace{0.06cm}
     | \hspace{0.06cm} \neg \phi \hspace{0.06cm} | \hspace{0.06cm} \textnormal{\textbf{X}} \phi \hspace{0.06cm} | \hspace{0.06cm} \phi\textnormal{\textbf{U}}\phi \hspace{0.06cm} | \hspace{0.06cm} \textnormal{\textbf{X}}^{-1} \phi 
     |\hspace{0.06cm} \phi\textnormal{\textbf{S}}\phi, \\ & \text{ where } x_{\text{loc}}, y_{\text{loc}} \in V_{\text{local}},\hspace{0.12cm} x_{\text{rem}}, y_{\text{rem}} \in V_{\text{remote}}
\end{align*}
We have already shown that the satisfiability problem for CLRV$^{\textnormal{XF}}$($\mathcal{D}$) is decidable for dense and open domains. The proof of decidability depended on the proofs of decidability of CLTL($\mathcal{D}$) and CLTL$^{\textnormal{XF}}$. Now, it has also been proven in \cite{DDG12} that the logic CLTL$^{\textnormal{XF},\textnormal{XF}^{-1}}$ is decidable. So, in a way, similar to how we proved that the satsifiability problem for CLRV$^{\textnormal{XF}}$($\mathcal{D}$) is decidable for dense and open constraint systems, one can combine the proofs of decidability of CLTL($\mathcal{D}$) and CLTL$^{\textnormal{XF},\textnormal{XF}^{-1}}$ to prove that the satisfiability problem for CLRV$^{\textnormal{XF},\textnormal{XF}^{-1}}$($\mathcal{D}$) is decidable for dense and open domains.\\

Now we consider a further extension of CLRV$^{\textnormal{XF},\textnormal{XF}^{-1}}$($\mathcal{D}$). In this extension, we allow for some interaction between local and remote variables. Intuitively, the simplest possible interaction would be to allow equality check between local and remote variables. We consider this extension first and call it CLRV$^{\textnormal{XF},\textnormal{XF}^{-1}, \textnormal{int}}$($\mathcal{D}$). Formally, the syntax of this extension is as follows:
\begin{align*}
     \phi :: = \hspace{0.06 cm} & x_{\text{loc}} \approx \textnormal{\textbf{X}}^{i} y_{\text{loc}} \hspace{0.06cm} |\hspace{0.06cm} x_{\text{loc}} < \textnormal{\textbf{X}}^{i} y_{\text{loc}} \hspace{0.06cm}|\hspace{0.06cm} x_{\textnormal{rem}} \approx \textnormal{\textbf{X}}^{i} y_{\textnormal{rem}\>} \hspace{0.06cm} | \hspace{0.06cm} x_{\text{rem}} \approx \textnormal{\textbf{XF}} y_{\text{rem}} \hspace{0.06cm}  \\ 
     &| \hspace{0.06cm} x_{\text{rem}} \approx \textnormal{\textbf{XF}}^{-1} y_{\text{rem}} \hspace{0.06cm} |\hspace{0.06cm} x_{\text{loc}} \approx  y_{\text{rem}} \hspace{0.06cm} | \hspace{0.06cm} \phi \wedge \phi \hspace{0.06cm}
     | \hspace{0.06cm} \neg \phi \hspace{0.06cm} | \hspace{0.06cm} \textnormal{\textbf{X}} \phi \hspace{0.06cm} | \hspace{0.06cm} \phi\textnormal{\textbf{U}}\phi \hspace{0.06cm} | \hspace{0.06cm} \textnormal{\textbf{X}}^{-1} \phi 
     |\hspace{0.06cm} \phi\textnormal{\textbf{S}}\phi, \\ & \text{ where } x_{\text{loc}}, y_{\text{loc}} \in V_{\text{local}},\hspace{0.12cm} x_{\text{rem}}, y_{\text{rem}} \in V_{\text{remote}}
\end{align*}

Here the constraint system $\mathcal{D}$ is of the form $(D, <, \approx)$, where $D$ is an infinite set, `$<$' is any strict total ordering over $D$ and `$\approx$' is the equality relation.\\

A $D$-valuation is defined to be a map from $V$ to $D$. A model is an infinite sequence $\sigma$ of $D$-valuations.\\

The satisfaction relation extends the satisfaction relation as defined for CLRV$^{\textnormal{XF}}$($\mathcal{D}$). For the atomic formulae not present in CLRV$^{\textnormal{XF}}$($\mathcal{D}$), the satisfaction relation $\models$ is defined, for every model $\sigma$ and for all $i \geq 0$, as follows:
\begin{itemize}
    \item $\sigma, i \models x_\text{rem} \approx \textnormal{\textbf{XF}}^{-1} y_\text{rem}$ iff there exists $j \geq 0$ such that $j < i$ and $\sigma(i)(x_\text{rem}) = \sigma(j)(y_\text{rem})$
    
    \item $\sigma, i \models x_\text{loc} \approx  y_\text{rem}$ iff there exists $\sigma(i)(x_\text{loc}) = \sigma(j)(y_\text{rem})$
\end{itemize}

One could extend CLRV$^{\textnormal{XF},\textnormal{XF}^{-1}, \textnormal{int}}$($\mathcal{D}$) by replacing $x_{\text{rem}} \approx \textnormal{\textbf{XF}} y_{\text{rem}}$ with $x_{\text{rem}} \approx \langle \phi? \rangle y_{\text{rem}}$ and $x_{\text{rem}} \approx \textnormal{\textbf{XF}}^{-1} y_{\text{rem}}$ with $x_{\text{rem}} \approx {\langle \phi? \rangle}^{-1} y_{\text{rem}}$ in the syntax. Formally, the syntax of this logic is: 
\begin{align*}
     \phi :: = \hspace{0.06 cm} & x_{\text{loc}} \approx \textnormal{\textbf{X}}^{i} y_{\text{loc}} \hspace{0.06cm} |\hspace{0.06cm} x_{\text{loc}} < \textnormal{\textbf{X}}^{i} y_{\text{loc}} \hspace{0.06cm}|\hspace{0.06cm} x_{\textnormal{rem}} \approx \textnormal{\textbf{X}}^{i} y_{\textnormal{rem}\>} \hspace{0.06cm} |\> x_{\textnormal{rem}} \approx \langle \phi? \rangle y_{\textnormal{rem}} \\
     &\>|\> x_{\textnormal{rem}} \approx {\langle \phi? \rangle}^{-1} y_{\textnormal{rem}} \>|\hspace{0.06cm} x_{\text{loc}} \approx  y_{\text{rem}} \hspace{0.06cm} | \hspace{0.06cm} \phi \wedge \phi \hspace{0.06cm}
     | \hspace{0.06cm} \neg \phi \hspace{0.06cm} | \hspace{0.06cm} \textnormal{\textbf{X}} \phi \hspace{0.06cm} | \hspace{0.06cm} \phi\textnormal{\textbf{U}}\phi \hspace{0.06cm} | \hspace{0.06cm} \textnormal{\textbf{X}}^{-1} \phi 
     |\hspace{0.06cm} \phi\textnormal{\textbf{S}}\phi, \\ & \text{ where } x_{\text{loc}}, y_{\text{loc}} \in V_{\text{local}},\hspace{0.12cm} x_{\text{rem}}, y_{\text{rem}} \in V_{\text{remote}}
\end{align*}

The satisfaction relation for the new atomic formulae is defined as follows:
\begin{itemize}
    \item $\sigma, i \models x_{\textnormal{rem}} \approx \langle \phi? \rangle y_{\textnormal{rem}}$ iff there exists $j$ such that $i < j$, $\sigma(i)(x_{\textnormal{rem}}) = \sigma(j)(y_{\textnormal{rem}})$ and $\sigma, j \models \phi$
    
    \item $\sigma, i \models x_{\textnormal{rem}} \approx {\langle \phi? \rangle}^{-1} y_{\textnormal{rem}}$ iff there exists $j \geq 0$ such that $j < i$, $\sigma(i)(x_{\textnormal{rem}}) = \sigma(j)(y_{\textnormal{rem}})$ and $\sigma, j \models \phi$
\end{itemize}

It turns out that there is a polynomial-time algorithm, which, given a formula $\phi$ in this extension, computes a formula $\phi^{\prime}$ in  CLRV$^{\textnormal{XF},\textnormal{XF}^{-1}, \textnormal{int}}$($\mathcal{D}$) that preserves satisfiability: there is a model
$\sigma$ such that $\sigma \models \phi$ iff there is a model $\sigma^{\prime}$ such that $\sigma^{\prime} \models \phi^{\prime}$. As in the case of LRC($\mathcal{D}$), the proof follows from the Proof of Corollary 4.6 of \cite{SDM16}. 

\begin{problem}
Let $\mathcal{D} = (D, <, \approx)$ be a constraint system. The satisfiability problem for CLRV$^{\textnormal{XF},\textnormal{XF}^{-1}, \textnormal{int}}$($\mathcal{D}$) is: Given an CLRV$^{\textnormal{XF},\textnormal{XF}^{-1}, \textnormal{int}}$($\mathcal{D}$) formula $\phi$, does there exist a model which satisfies $\phi$?
\end{problem}

We shall now prove that the satisfiability problem for CLRV$^{\textnormal{XF},\textnormal{XF}^{-1}, \textnormal{int}}$($\mathcal{D}$) is undecidable. 

\subsection{Proof of Undecidability}
Here, we again give a reduction from infinite version of PCP. Since the problem is that of satisfiability, we use the formulae $x_{\text{rem}} \approx \langle \phi? \rangle y_{\text{rem}}$ and $x_{\text{rem}} \approx {\langle \phi? \rangle}^{-1} y_{\text{rem}}$ freely while giving the reduction as we already saw that these can be expressed in terms of the formulae in CLRV$^{\textnormal{XF},\textnormal{XF}^{-1}, \textnormal{int}}$($\mathcal{D}$) while preserving satisfiability.\\

Consider a finite alphabet $\Sigma = \{a_1,a_2,...a_r\}$. An instance of infinite PCP
consists of $k$ pairs $(u_i, v_i)$ of words from $\Sigma^{*}$ and the question is whether there exists an infinite sequence of indices $i_0, i_1, ...$ such that $u_{i_0}u_{i_1}... = v_{i_0}v_{i_1}... $.\\

Let $\overline{\Sigma} = \Sigma \cup \Sigma^{\prime}$  be the alphabet consisting of two disjoint copies of $\Sigma$. If $w$ is a word in $\Sigma^{\omega}$, then $\overline{w} \in \overline{\Sigma}^{\omega}$ is obtained from $w$ by replacing each letter $a$ with the corresponding letter $\overline{a}$ in $\overline{\Sigma}$. Consider a solution $i_0, i_1, ... $ for the given infinite PCP instance. Let $w$ be the infinite word $u_{i_0}u_{i_1}, ... $, or equivalently the word $v_{i_0}v_{i_1}, ... $. Let $\hat{w}$ be the infinite word $u_{i_0}\overline{v_{i_0}}u_{i_1}\overline{v_{i_1}}...$ over $\overline{\Sigma}$.\\

Now, consider the set of remote variables $V_{\text{remote}} = \{x, x_1, x_2, ... x_r, y_1, y_2, ... y_r, w_1, w_2\}$ and the set of local variables $V_{\text{local}} = \{z_1, z_2. z_3, z_4\}$ (where $r = |\Sigma|$). As we already know, the set of all variables $V = V_{\text{local}} \biguplus V_{\text{remote}}$. Given a solution $\hat{w}$ to an instance of the infinite PCP, we encode it as a $D$-valuation sequence $\sigma$ over $V$, satisfying the following conditions:
\begin{enumerate}[(C1)]

    \item For all $i \geq 0$, $1 \leq j \leq r$:  
    
    $\hat{w}(i) = a_j$ iff $\sigma(i)(x) = \sigma(i)(x_j)$ and $\sigma(i)(x) \not = \sigma(i)(v)$ for any $v \in V_{\text{remote}}\setminus\{x_j\}$;\\
    
    $\hat{w}(i) = \overline{a_j}$ iff $\sigma(i)(x) = \sigma(i)(y_j)$ and $\sigma(i)(x) \not = \sigma(i)(v)$ for any $v \in V_{\text{remote}}\setminus\{y_j\}$
    
    \item The subsequence of values of the variable $x$ at $\Sigma$-positions (i.e. positions $i$ for which $\hat{w}(i) \in \Sigma$) is equal to the subsequence of values of the variable $x$ at $\overline{\Sigma}$-positions (i.e. positions $i$ for which $\hat{w}(i) \in \overline{\Sigma}$) 
    
    \item For all $i \geq 0$, $j \geq 0$, $i \not = j$, if $\hat{w}(i) \in \Sigma$ and $\hat{w}(j) \in \Sigma$, then $\sigma(i)(x) \not = \sigma(j)(x)$. Similarly, for all $i \geq 0$, $j \geq 0$, $i \not = j$, if $\hat{w}(i) \in \overline{\Sigma}$ and $\hat{w}(j) \in \overline{\Sigma}$, then $\sigma(i)(x) \not = \sigma(j)(x)$
    
\end{enumerate}
We now describe an CLRV$^{\textnormal{XF},\textnormal{XF}^{-1}, \textnormal{int}}$($\mathcal{D}$) formula $\psi$ such that $w$ is a solution to the infinite PCP if and only if $\sigma$ is a model of $\psi$. Note that while describing the formula we shall interchangeably use $(x \approx x_i)$ with $a_i$, $(x \approx y_i)$ with $\overline{a_i}$, $(x \approx w_1)$ with $p$ and $(x \approx w_2)$ with $q$. The formula is the conjunction of the following properties:

\begin{enumerate} [(P1)]
    \item At every position, the value of $x$ is equal to that of exactly one variable in $V_{\text{remote}} \setminus \{x\}$
    
    \item Now that we are taking a conjunction, we can define the notion of a word corresponding to any $D$-valuation sequence satisfying property 1. If $\sigma_0$ is a $D$-valuation sequence satisfying property 1, then the word corresponding to $\sigma_0$ is the infinite word $w_0 \in (\overline{\Sigma} \cup \{p,q\})^\omega$ such that at every position $i$, we have $w_0(i)$ equal to the unique letter from $\overline{\Sigma} \cup \{p,q\}$, that holds at $\sigma_0(i)$. (The existence and uniqueness of the letter that holds at each position of $\sigma_0$, follows from property 1).
    Now the second property is: The word corresponding to any model that satisfies the formula must belong to $\{u_ip\overline{v_i}q \mid 1 \leq i \leq k\}^\omega$. This can be easily expressed as a formula in CLRV$^{\textnormal{XF},\textnormal{XF}^{-1}, \textnormal{int}}$($\mathcal{D}$). 
    
    \item The value of $x$ at every $\Sigma$-position is strictly less than the value of $x$ at the next $\Sigma$-position.
    The value of $x$ at every $\overline{\Sigma}$-position is strictly less than the value of $x$ at the next $\overline{\Sigma}$-position. This condition can be expressed as a formula in CLRV$^{\textnormal{XF},\textnormal{XF}^{-1}, \textnormal{int}}$($\mathcal{D}$). This ensures that the sequence of values of $x$ at $\Sigma$-positions is in ascending order and that the sequence of values of $x$ at $\overline{\Sigma}$-positions is also in ascending order.
    
    \item For all $1 \leq i \leq r$, the value of $x$ at every $a_i$ position is equal to the value of $x$ at some $\overline{a_i}$ position (in the past or future). Similarly, for all $1 \leq i \leq r$, the value of $x$ at every $a_i$ position is equal to the value of $x$ at some $\overline{a_i}$ position (in the past or future). 
    
\end{enumerate}
These properties are very similar to the properties in the previous undecidability proof. Here, (P3) and (P4), together ensure that for any model satisfying the formula, the value of $x$ at the $i$th $\Sigma$ position is equal to the value of $x$ at the $i$th $\overline{\Sigma}$ position, for all $i \geq 0$ (Note that the values of $x$ at all $\Sigma$ positions are different and that the values of $x$ at all $\overline{\Sigma}$ positions are also different). (P4) says that the value of $x$ at an $a_j$ position is equal to the value of $x$ at an $\overline{a_j}$ position (for all $1 \leq j \leq r$). This means that if the letter at the $i$th $\Sigma$ position is $a_j$, then the letter at $i$th $\overline{\Sigma}$ position is $\overline{a_j}$ (for all $1 \leq j \leq r$, $i \geq 0$). \\

Thus it is clear that if $\sigma$ is a model satisfying the properties above, then the word corresponding to $\sigma$ is of the form $u_{i_0}p\overline{v_{i_0}}qu_{i_1}p\overline{v_{i_1}}q...$ where $u_{i_0}u_{i_1}... = v_{i_0}v_{i_1}...$. Hence we have a solution to the infinite PCP. Similarly, now if $w$ is a solution to the infinite PCP, consider the word $\hat{w}$ as defined earlier. Now consider a $D$-valuation sequence $\sigma$ that we defined using $\hat{w}$, satisfying conditions (C1), (C2) and (C3). Now any such $\sigma$ defined using $\hat{w}$ that satisfies (C1), (C2) and (C3), clearly satisfies all the four properties ((P1) - (P4)) and hence is a model of the formula.\\

All that remains is to describe the formula $\psi$ which is a conjunction of the four properties described above.

\begin{equation*}
    \psi = \psi_1 \wedge \psi_2 \wedge \psi_3 \wedge \psi_4
\end{equation*}

where:
\begin{align*}
    \psi_1 = \textnormal{\textbf{G}}\begin{pmatrix}&\bigvee\limits_{i=1}^{r} \begin{pmatrix}&\left(\left(\left(x \approx x_i \right) \wedge \bigwedge\limits_{i \not = j} \neg\left(x \approx x_j \right) \wedge \bigwedge\limits_{j=1}^{r} \neg\left(x \approx y_j \right)\right) \right. \\
    & \left. \vee  \left(\left(x \approx y_i \right) \wedge \bigwedge\limits_{i \not = j} \neg\left(x \approx y_j \right) \wedge \bigwedge\limits_{j=1}^{r} \neg\left(x \approx x_j \right)\right)\right)
    \wedge \neg\left(x \approx w_1 \right) \wedge \neg\left(x \approx w_2 \right)\end{pmatrix} \\\\
    & \left. \vee \left(\left(x \approx w_1 \right) \wedge \bigwedge\limits_{j=1}^{r} \neg\left(x \approx x_j \right) \wedge \bigwedge\limits_{j=1}^{r} \neg\left(x \approx y_j \right) \wedge \neg\left(x \approx w_2 \right)\right) \right. \\\\
    & \left. \vee \left(\left(x \approx w_2 \right) \wedge \bigwedge\limits_{j=1}^{r} \neg\left(x \approx x_j \right) \wedge \bigwedge\limits_{j=1}^{r} \neg\left(x \approx y_j \right) \wedge \neg\left(x \approx w_1 \right)\right) \right. \end{pmatrix}  \\
\end{align*}

\begin{flalign*}
    \psi_2 = \alpha \wedge \textnormal{\textbf{G}}(q \Rightarrow \textnormal{\textbf{X}} \alpha) &&
\end{flalign*}
\newline
with 
\begin{flalign*}
    \alpha = \bigvee\limits_{i=1}^{k}\begin{pmatrix}& u_{i_1} \wedge \textnormal{\textbf{X}} u_{i_2} \wedge ... \wedge \textnormal{\textbf{X}}^{(l_i - 1)} u_{i_{l_i}} \\\\
    & \wedge \textnormal{\textbf{X}}^{l_i}p \wedge \textnormal{\textbf{X}}^{(l_i + 1)} \overline{v_{i_1}} \\\\
    & \wedge 
    \textnormal{\textbf{X}}^{(l_i + 2)} \overline{v_{i_2}} \wedge ... \wedge \textnormal{\textbf{X}}^{(l_i + m_i)} \overline{v_{i_{m_i}}} \\\\
    & \wedge 
    \textnormal{\textbf{X}}^{(l_i + m_i + 1)}q 
    \end{pmatrix} &&
\end{flalign*}
\newline
(Here, $l_i$ denotes the length of the word $u_i$ and $u_{i_j}$ denotes the $j$th letter of the word $u_i$ for all $1 \leq j \leq l_i$. Similarly, $m_i$ denotes the length of the word $\overline{v_i}$ and $\overline{v_{i_j}}$ denotes the $j$th letter of the word $\overline{v_i}$ for all $1 \leq j \leq m_i$.)

\begin{flalign*}
    \psi_3 = \beta \wedge \gamma &&
\end{flalign*}
with 
\begin{flalign*}
\beta = \textnormal{\textbf{G}}\begin{pmatrix} &\bigwedge\limits_{i=1}^{r} \begin{pmatrix}((x \approx y_i) \wedge \textnormal{\textbf{X}} (\neg q)) \Rightarrow ((x \approx z_1) \wedge (z_1 < \textnormal{\textbf{X}}z_2) \wedge (\textnormal{\textbf{X}}x \approx \textnormal{\textbf{X}}z_2)) \end{pmatrix} \\\\
 &\wedge \bigwedge\limits_{i=1}^{r} \begin{pmatrix}((x \approx y_i) \wedge \textnormal{\textbf{X}} q) \Rightarrow \begin{pmatrix}&(x \approx z_1) \wedge (z_1 < \textnormal{\textbf{X}}z_2) \\\\
 &\wedge \textnormal{\textbf{X}}\begin{pmatrix}&((z_2 \approx \textnormal{\textbf{X}}z_2) \wedge \neg p) \\
 & \textnormal{\textbf{U}} (p \wedge (z_2 \approx \textnormal{\textbf{X}}z_2) \wedge \textnormal{\textbf{X}}(z_2 \approx x)))\end{pmatrix}\end{pmatrix} \end{pmatrix} \\\\
\end{pmatrix} &&
\end{flalign*}
and 
\begin{flalign*}
\gamma = \textnormal{\textbf{G}}\begin{pmatrix} &\bigwedge\limits_{i=1}^{r} \begin{pmatrix}((x \approx x_i) \wedge \textnormal{\textbf{X}} (\neg p)) \Rightarrow ((x \approx z_3) \wedge (z_3 < \textnormal{\textbf{X}}z_4) \wedge (\textnormal{\textbf{X}}x \approx \textnormal{\textbf{X}}z_4)) \end{pmatrix} \\\\
 &\wedge \bigwedge\limits_{i=1}^{r} \begin{pmatrix}((x \approx x_i) \wedge \textnormal{\textbf{X}} p) \Rightarrow \begin{pmatrix}&(x \approx z_3) \wedge (z_3 < \textnormal{\textbf{X}}z_4) \\\\
 &\wedge \textnormal{\textbf{X}}\begin{pmatrix}&((z_4 \approx \textnormal{\textbf{X}}z_4) \wedge \neg q) \\
 & \textnormal{\textbf{U}} (q \wedge (z_4 \approx \textnormal{\textbf{X}}z_4) \wedge \textnormal{\textbf{X}}(z_4 \approx x)))\end{pmatrix}\end{pmatrix} \end{pmatrix} \\\\
\end{pmatrix} &&
\end{flalign*}
Note here that we are able to say that the value of $x$ at the current $\Sigma$-position (which is a remote variable) is less than the value of $x$ (a remote variable) at the next $\Sigma$-position. That is, we are able to compare the values of remote variables at positions which are not at a fixed distance. This is possible only because we allow interaction between the remote variable $x$ and the local variables $z_1$ and $z_2$. 
\begin{flalign*}
\psi_4 = \textnormal{\textbf{G}}\begin{pmatrix}&\bigwedge\limits_{i=1}^{r}((x \approx x_i) \Rightarrow ((x \approx \langle (x \approx y_i)? \rangle x) \vee (x \approx {\langle (x \approx y_i)? \rangle}^{-1} x))) \\\\
& \wedge \bigwedge\limits_{i=1}^{r}((x \approx y_i) \Rightarrow ((x \approx \langle (x \approx x_i)? \rangle x) \vee (x \approx {\langle (x \approx x_i)? \rangle}^{-1} x)))
\end{pmatrix} && 
\end{flalign*}
One can easily verify that the formulas $\psi_1$-$\psi_4$ express the properties (P1)-(P4). \\

Hence, given an instance of infinite PCP, we have constructed an CLRV$^{\textnormal{XF},\textnormal{XF}^{-1}, \textnormal{int}}$($\mathcal{D}$)  formula $\psi$ such that the infinite PCP has a solution iff the formula $\psi$ is satisfiable. \\

\begin{theorem}
The satisfiability problem for CLRV$^{\textnormal{XF},\textnormal{XF}^{-1}, \textnormal{int}}$($\mathcal{D}$) is undecidable.\\
\end{theorem}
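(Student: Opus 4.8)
The plan is to prove undecidability by a reduction from the infinite Post Correspondence Problem, following exactly the construction of $\psi_1$--$\psi_4$ set up above. Given an instance $(u_1,v_1),\dots,(u_k,v_k)$ over $\Sigma=\{a_1,\dots,a_r\}$, I would encode a candidate solution $i_0,i_1,\dots$ by the interleaved word $\hat w = u_{i_0}\,p\,\overline{v_{i_0}}\,q\,u_{i_1}\,p\,\overline{v_{i_1}}\,q\cdots$ over $\overline\Sigma\cup\{p,q\}$, where $\overline\Sigma$ is a disjoint marked copy of $\Sigma$ and $p,q$ are fresh separators, and represent this word by a $D$-valuation sequence over the remote variables $\{x,x_1,\dots,x_r,y_1,\dots,y_r,w_1,w_2\}$: position $i$ carries the letter $a_j$ (resp.\ $\overline{a_j}$, $p$, $q$) exactly when $\sigma(i)(x)=\sigma(i)(x_j)$ (resp.\ $=\sigma(i)(y_j)$, $=\sigma(i)(w_1)$, $=\sigma(i)(w_2)$) and agrees with no other remote variable. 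The local variables $z_1,\dots,z_4$ are auxiliary and used only in $\psi_3$.

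Next I would check that $\psi=\psi_1\wedge\psi_2\wedge\psi_3\wedge\psi_4$ defines precisely the encodings of PCP solutions, arguing both directions. Formula $\psi_1$ enforces (P1), so the ``word corresponding to $\sigma$'' is well defined; $\psi_2$ enforces (P2), i.e.\ that this word lies in $\{u_i\,p\,\overline{v_i}\,q\mid 1\le i\le k\}^\omega$, which is a plain LTL constraint over the $2r+2$ letter predicates. Formulas $\psi_3$ and $\psi_4$ together force, for every $i$, that the value of $x$ at the $i$-th $\Sigma$-position equals the value of $x$ at the $i$-th $\overline\Sigma$-position: $\psi_3$ makes the sequence of $x$-values strictly increasing along $\Sigma$-positions and (separately) along $\overline\Sigma$-positions, so within each family all these values are distinct; $\psi_4$ says that the $x$-value at each $a_j$-position reappears as the $x$-value at some $\overline{a_j}$-position (in the past or the future). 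Distinctness plus monotonicity leaves only the position-by-position pairing, which forces the $i$-th letter of the $u$-word and of the $v$-word to be the same $a_j$; hence a model of $\psi$ yields an infinite PCP solution. Conversely, from a solution one builds $\hat w$, then a valuation sequence satisfying (C1)--(C3) and (P3) (chosen so that the $x$-values increase strictly), which then satisfies (P1)--(P4) and hence $\psi$. Undecidability of the satisfiability problem then follows from that of infinite PCP.

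The crux, and the step I expect to be the main obstacle, is expressing (P3): strict monotonicity of the $x$-values at the $\Sigma$-positions, and at the $\overline\Sigma$-positions, must be forced even though consecutive positions in either family are an \emph{unbounded} distance apart, whereas the remote predicates only test equality ($\mathbf{X}^i$, $\mathbf{XF}$, $\mathbf{XF}^{-1}$) and the $<$ predicate is confined to \emph{local} variables at a fixed $\mathbf{X}$-distance. This is exactly where the new same-position interaction atom $x_{\mathrm{loc}}\approx y_{\mathrm{rem}}$ is essential. In the $\overline\Sigma$-part $\beta$ of $\psi_3$ one handles the within-block case (successor not $q$) trivially with a single $\mathbf{X}$: copy $x$ into $z_1$, require $z_1<\mathbf{X}z_2$ and $\mathbf{X}x\approx\mathbf{X}z_2$. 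For the across-block case (successor is $q$) one instead \emph{relays} the stored threshold forward: set $x\approx z_1$ and $z_1<\mathbf{X}z_2$, then keep $z_2$ constant step by step via $z_2\approx\mathbf{X}z_2$ inside an $\mathbf{U}$-formula running from the $q$-position through the next $u$-block until a $p$-position, where the check $\mathbf{X}(z_2\approx x)$ re-imports the relayed value at the first $\overline\Sigma$-position of the next block, yielding $x_{\text{prev }\overline\Sigma}=z_1<z_2=x_{\text{next }\overline\Sigma}$; the symmetric part $\gamma$ does the same for $\Sigma$-positions using $z_3,z_4$ and separator $p$. I would take care that the constancy interval of the relay is exactly the gap between the two consecutive marked positions, so no spurious ordering constraints leak in, and that $\psi_2$ guarantees the block shape that makes the $\mathbf{U}$-witnesses unique. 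Finally, I would note that the conditional-jump atoms $\langle(\cdot)?\rangle$ used freely in $\psi_4$ are legitimate by the satisfiability-preserving polynomial translation into CLRV$^{\textnormal{XF},\textnormal{XF}^{-1},\textnormal{int}}$($\mathcal{D}$) stated above (along the lines of the proof of Corollary~4.6 of \cite{SDM16}).
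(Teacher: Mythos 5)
Your proposal is correct and follows essentially the same route as the paper: the same reduction from infinite PCP with the same encoding, the same properties (P1)--(P4), and in particular the same mechanism in $\psi_3$ of using the interaction atoms to copy $x$ into a local variable, compare with $<$ over one step, and relay the value across the unbounded gap between consecutive marked positions via an \textbf{U}-formula that holds the local variable constant until the next separator. The only point worth making explicit in a full write-up is the verification that the \textbf{U}-witness is forced to be the intended one (which, as you note, $\psi_2$'s block shape guarantees).
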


It is worth noting here, why the proof of decidability of CLRV$^{\textnormal{XF}}$($\mathcal{D}$) over dense and open domains, does not go through in this case. Remember that one of the main ideas in the proof of decidability was to split the symbolic model into its local and remote projections and construct an automaton which accepts those symbolic models whose local projections and remote projections are independently realizable. It was based on Lemma 2.1.1. But such a lemma does not hold here as there is an interaction between the local and remote variables. Hence no obvious extrapolation of the automata-based construction in the proof of decidability of CLRV$^{\textnormal{XF}}$($\mathcal{D}$) would work in this case.
    \chapter*{Conclusion} 
\addcontentsline{toc}{chapter}{Conclusion}

In this thesis, we have considered the logic CLRV$^{\textnormal{XF}}$($\mathcal{D}$) which is a natural extension of the logics CLTL($\mathcal{D}$) (\cite{DD07}) and CLTL$^{\textnormal{XF}}$ (\cite{DDG12}). It is known that the satisfiabiity problem for the logic CLTL($\mathcal{D}$) is decidable over dense, open domains (Theorem 1.1.3) and the satisfiability problem for CLTL$^{\textnormal{XF}}$ is also decidable (Theorem 1.2.1).  Combining the proofs of these decidability results, we proved in our thesis that over dense, open domains, the satisfiability problem for CLRV$^{\textnormal{XF}}$($\mathcal{D}$) is decidable.\\

Now, it is also known that the satisfiability problem for the logic CLTL($\mathcal{D}$) is decidable even over $\mathbb{N}$ and over $\mathbb{Z}$ (\cite[Theorem 6.4]{DD07}, \cite[Theorem 7.3]{DD07}). So a natural question to ask would be whether the satisfiability problem for CLRV$^{\textnormal{XF}}$($\mathcal{D}$) is also decidable over $\mathbb{N}$ and over $\mathbb{Z}$. This is still open and is one possible direction for future work.\\

We have proved the undecidability of the satisfiability problem for LRC$^{\top}(\mathcal{D})$ (Theorem 3.1.1). Note that the past remote constraint $x \approx \textnormal{\textbf{XF}}^{-1} y$ appears in the syntax of this logic. Also, one can observe that the formula $\psi_4$ in section 3.1 of our thesis uses this past remote constraint to express property (P4). An interesting question to ask at this point would be: Is the satisfiability problem for this logic undecidable even without this past remote constraint? This problem is also open.

\end{document}